\newtheorem{theorem}{Theorem}
\newtheorem{lemma}[theorem]{Lemma}
\newtheorem{proposition}[theorem]{Proposition}
\newtheorem{definition}[theorem]{Definition}
\theoremstyle{remark}
\numberwithin{theorem}{section} \numberwithin{equation}{section}
\newcommand{\R}{\mathbb{R}}
\newcommand{\C}{\mathbb{C}}
\newcommand{\Tr}{{\text {\rm Tr}}}
\newcommand{\Z}{\mathbb{Z}}
\newcommand{\N}{\mathbb{N}}
\newcommand{\bea}{\begin{eqnarray}}
\newcommand{\eea}{\end{eqnarray}}
\newcommand{\be}{\begin{equation}}
\newcommand{\ee}{\end{equation}}
\newcommand{\ben}{\begin{eqnarray*}}
\newcommand{\een}{\end{eqnarray*}}
\newcommand{\bem}{\begin{pmatrix}}
\newcommand{\eem}{\end{pmatrix}}
\newcommand{\bal}{\begin{align}}
\newcommand{\eal}{\end{align}}
\newcommand{\CN}{{\mathcal N}}
\newcommand{\CH}{{\mathcal H}}
\newcommand{\ndt}{\noindent}
\newcommand{\wt}{\widetilde}
\newcommand{\IZ}{{\mathbb Z}}
\newcommand{\D}{{\Delta}}
\newcommand{\vth}{{\vartheta}}
\newcommand{\myh}{{\xi}}
\def\t{\tau}
\def\s{\sigma}
\def\v{\varphi}
\def\l{\lambda}
\def\m{\mu}
\def\={\;  = \;}
\def\+{\, + \,}
\def\inn{\,\in\,}
\def\mypmod#1{\, {(\rm mod} \, {#1})}
\begin{document}

\hspace{13cm} \mbox{\small NIKHEF2012-017}

\vspace{1cm}

\title{On the positivity of black hole degeneracies in string theory}
\author{Kathrin Bringmann$^1$}
\address{$^1$Mathematical Institute, University of Cologne\\ Weyertal 86-90, 50931 Cologne, Germany}
\email{kbringma@math.uni-koeln.de}
\author{Sameer Murthy$^2$}
\email{murthy.sameer@gmail.com}
\address{$^2$NIKHEF theory group, Science Park 105 \\ 1098 XG Amsterdam, The Netherlands}
\subjclass[2010] {11F27, 11F25, 11F37, 11E16, 11F11}
\keywords{string theory, black holes, supersymmetric index, mock modular forms}
\begin{abstract}
Certain helicity trace indices of charged states in $\CN=4$ and $\CN=8$ superstring theory have been
computed exactly using their explicit weakly coupled microscopic description.
These indices are expected to count the exact quantum degeneracies of black holes
carrying the same charges.
In order for this interpretation to be consistent, these indices should be positive integers.
We prove this positivity property for a class of four/five dimensional black holes
in type II string theory compactified on $T^{6}/T^{5}$ and on $K3 \times T^{2}/S^{1}$.
The proof relies on the mock modular properties of the corresponding generating functions.
\end{abstract}
\maketitle

\section{Introduction and statement of results}
The study of supersymmetric black holes in string theory has been very effective in
shedding light on the issue of black hole entropy.
The strength of the string theoretic approach lies in the fact that there
are two related descriptions of charged black holes. The first (strong string coupling, macroscopic) description
is a low energy effective description as general relativity coupled to a set of matter fields.
In the second (weak string coupling, microscopic) description,  a generic
state of the theory with the same charges as the black hole is identified as a collection of
fundamental objects of string theory namely fundamental strings and branes.
The fluctuations of these objects make up the elementary excitations
(microstates) of the theory, which can be described by conventional quantum field theoretic methods.
The key idea is to identify these two descriptions valid at strong and weak coupling, respectively.
At strong coupling, the excitations of the strings and branes exert a gravitational force on each
other, and the black hole can be thought of as a quantum mechanical bound state of these microstates.

In a class of supersymmetric string theories with sixteen or more unbroken supercharges
we now have a practically complete understanding of the spectrum of BPS states (see~\cite{Mandal:2010cj}
for a relatively recent review).
One can therefore subject the above idea to high precision tests, by comparing
the statistical entropy of the ensemble of states and an appropriately defined thermodynamic entropy of
the corresponding BPS black hole, beyond a large charge approximation. Since we know the
microscopic degeneracies exactly, one can even aim for an exact comparison for
finite charges using the framework of the quantum entropy formalism \cite{Sen:2008yk, Sen:2008vm}.
The first such comparisons have been successfully
performed in highly supersymmetric examples \cite{Dabholkar:2010uh, Dabholkar:2011ec},
and this has been expressed as an exact (finite $N$) $AdS_{2}/CFT_{1}$ correspondence:
\be\label{exacthol}
d_{\rm hor}(n_{i}) = d_{\rm micro} (n_{i}),
\ee
where $\{ n_{i} \}$ are the quantized charges of the black hole, $d_{\rm hor}$ is the macroscopic
black hole entropy, and $d_{\rm micro}$ is the microscopic degeneracy.

In carrying out such a comparison there is an important subtlety. On the macroscopic side,
the black hole entropy is supposed to calculate the logarithm of the absolute degeneracy
of states $d_{\rm hor}$ according to the Boltzmann relation. On the other hand, on the microscopic side,
one normally computes a supersymmetric index (like a helicity supertrace),
and so $d_{\rm micro}$ is a difference of the number of bosonic and fermionic multiplets.
These two quantities are a priori not the same, but as we  review below, it has been
argued that holography gives an explanation of their equality \cite{Dabholkar:2010rm, Sen:2009vz}.

Due to the interpretation of $d_{\rm hor}$ as computing the number of states of the black hole,
an immediate consequence of the equality \eqref{exacthol} is that the microscopic index
$d_{\rm micro}$ should be a positive integer. The known examples of exact BPS
(indexed) counting formulas are all related to Fourier coefficients of automorphic
forms of various types, and for the explicit automorphic forms under consideration
(discussed below), it is not at all obvious that their Fourier coefficients obey this positivity criterion.
Positivity can thus be thought of as a prediction from the quantum theory of black holes\footnote{Strictly
speaking, one has a notion of a black hole in the gravitational theory only in a large charge approximation.
At infinite charges, one is in a classical two derivative theory with a well-defined notion of a horizon.
The $1/n_{i}$ corrections can be thought of as slightly changing the location of the horizon in spacetime.
However, at small values of the charges, spacetime is highly curved and it is possible that classical
notions completely break down. The prediction assumes that there is still some sense in semi-classical
reasoning for all charges.}
for Fourier coefficients of automorphic forms.
Checking the prediction for positivity is therefore a (perhaps coarse, but) important check of our
understanding of black holes in quantum gravity.
Our aim in this paper is to prove the positivity criterion for a class of black holes in theories
with $\CN \ge 4$ supersymmetry.

\vspace{0.2cm}

\ndt \emph{Index = Degeneracy}

The supersymmetric index receives contribution only from BPS states and hence is protected from
any change under continuous deformations of the moduli of the theory, so (in the absence
of wall-crossings) the microscopic index is the same as the macroscopic index.
In the macroscopic theory, the index can be argued to be equal to the degeneracy as follows.
The near-horizon geometry of the supersymmetric black hole always has an
$AdS_{2}$ factor, which has an $SU(1, 1)$ symmetry. If the black hole geometry leaves at least four supersymmetries unbroken, then the closure of the supersymmetry algebra requires that the near horizon
symmetry must contain the supergroup $SU(1, 1|2)$, the bosonic $SU(2)$ R-symmetry being identified
with spatial rotations. This means that the horizon states on an average have zero
charge under the Cartan generator $J$ of this $SU(2)$. The $AdS_{2}$ geometry further fixes the
theory to be in the microcanonical ensemble, which implies that, in fact, every state in the ensemble
has~$J=0$.  So we have
\be\label{ind-deg}
\Tr (-1)^{J} = \Tr (1) \, ,
\ee
that is, index equals degeneracy.
For a more detailed discussion see \cite{Dabholkar:2010rm}.

Note the the index equals degeneracy only for the horizon degrees of freedom, but usually one
does not compute the index of the horizon degrees of freedom directly. It is easier to compute
the index  of the asymptotic states as a spacetime helicity supertrace which receives contribution
also from the degrees of freedom external to the horizon. It is crucial that the  contribution of
these external modes is removed from the helicity supertrace before checking the equality \eqref{ind-deg}.
Typically, modes localized outside the horizon come from three sources \cite{Sen:2010mz}
-- fluctuations of supergravity fields around the black hole solution, non-linear gravitational
configurations like multi-centered black holes, and fermion zero modes.

The field fluctuations localized outside the horizon come from fields that carry NS-NS charges such
as the momentum, but not from those that carry D-brane charges \cite{Banerjee:2009uk,Jatkar:2009yd}.
In a duality frame where all charges come from D-branes, one therefore does not have to worry
about these external field fluctuations.
The contributions of multi-centered black holes, when present, have to be explicitly subtracted.
For theories that preserve 16 or more supercharges, black hole solutions with three or more centers
are expected not to contribute to the index \cite{Dabholkar:2009dq}. In these situations, one still has to
subtract the contribution from two-centered configurations, as we do in explicit examples in this paper.
The third source, i.e.~fermion zero modes are generically present and one has to
deal with them explicitly. We assume that the only fermion zero modes present are those arising
from broken supersymmetry. In that case, they contribute an overall (positive or negative) rational
constant to the index, which one has to factor out as is done in \cite{Dabholkar:2010rm, Sen:2010mz}.

\smallskip

\ndt {\it The positivity conjecture}
\smallskip

Putting together the above discussion, one can make the following precise conjecture about the sign of the
index of BPS states in any given string theory:
$d_{\rm micro}(n_{i})>0$ whenever a black hole solution carrying the corresponding charges $(n_{i})$ can exist.
This positivity conjecture was presented by Ashoke Sen at the ASICTP school on  modular
forms and their applications in March 2011.
In this paper, we prove this conjecture for a class of black hole in~$\CN=4$ and~$\CN=8$ string theory.

We now make a brief list of the various black holes that we study in this paper, along with the corresponding
automorphic form that controls their degeneracies. The formulas for the microscopic degeneracies~$d_{\rm micro}$
as a function of the black hole charges in each case will be given in the bulk of the paper.
We will study four and five dimensional string theories with 32 supercharges
(case 1) and 16 supercharges (case 2). The four dimensional black holes are:
\begin{enumerate}
\item[1a.] \emph{1/8-BPS black holes in type II string theory on $T^{6}$}.
These black holes are labelled by an integer $\Delta$, and $d^{(1a)}_{\rm micro}$
is given in terms of the Fourier coefficients $c(n,r)$ with $\Delta = 4n-r^{2}$ of
$\v_{-2,1}(\tau, z)$, a weight $k=-2$ and index $m=1$ weak Jacobi form.
\item[2a.] \emph{1/4-BPS  black holes in  type II string theory on $K3 \times T^{2}$}.
Here, the black holes are labelled by three integers $(n,r,m)$, and
$d^{(2a)}_{\rm micro}$ is given in terms of the Fourier coefficients of the Siegel
modular form $1/\Phi_{10}(\s,\t,z)$ expanded in the ``attractor region''.
\end{enumerate}
In the corresponding five-dimensional situations (1b, 2b),
the theories are related to their  four-dimensional counterparts
by a decompactification of one of the circles of the~$T^{2}$, and the
black holes in these theories are related to their four-dimensional counterparts by the
4d-5d lift~\cite{Gaiotto:2005gf}. %

For the automorphic forms written above, the existence of the black hole solution implies
that the discriminant~$4mn-r^{2}>0$.
Case (1a) is very simple to prove, the proof is simply a statement of the positivity of the
Fourier coefficients of the canonical Jacobi theta function and negative powers of the eta function,
as was already mentioned
in \cite{Dabholkar:2011ec}. Case (1b) follows with little work, we present the proof below.
Case (2a) and (2b) are more difficult to prove, the main reason being that the
expansion of the Siegel form in the attractor region destroys the automorphic
properties of the generating function.
Numerical evidence for case (2a) was first written in \cite{Sen:2010mz}.
We can perform a Fourier expansion in the $\tau'$ variable without a problem and
the Fourier coefficient of $e^{2 \pi \i m \tau'}$ is a Jacobi form of weight $-10$
and index $m$. This Jacobi form is meromorphic in the $z$ variable, and one therefore needs
to specify the contour to define its Fourier expansion, and the Fourier expansion breaks
the modular properties. However, it has been shown that $d_{\rm micro}^{(2a)}$ is a Fourier
coefficient of a mock Jacobi form \cite{Dabholkar:2011mm},
and one can recover a remnant of the modular properties in a very elegant way.
Although the full theory for these objects is not known, enough is known to do a case-by-case
analysis in the magnetic charge invariant~$m$.

In this paper, we prove that $d_{\rm micro}>0$ in the cases (2a), (2b) for $m=1,2$ for all values
of~$(n,r)$ with $4mn-r^{2}>0$.
We present two proofs of the positivity in the case $m=2$.
The first proof uses the Circle Method.  In the case of modular forms this method only requires knowing
the weight and the principal part of the modular form in all cusps.   This was extended by the first author
and Ono  \cite{BO1, BO2} to  mock modular
forms and by the first author and Mahlburg to mixed mock modular forms \cite{BM}.
 The second method is complementary in that we use
the explicit knowledge of the full functions, but we can write down an algebraic proof that holds
for all coefficients. It relies on the explicit knowledge of the
modular and mock modular forms in our examples, and simple algebraic facts about
the basic building blocks of modular forms -- theta functions,
eta functions, and the Eisenstein series -- and a simple estimate for the Hurwitz-Kronecker
class numbers. Both these methods need us to specify the value of the index $m$.
Although we work out the first two cases $m=1,2$ here, both our proofs can be extended to higher
values of $m$ case-by-case. It would be nice, however, to come up with a proof which
tackles all values of $m$ at one shot.

The remainder  of the paper is structured  as follows.
In \S\ref{1/8}, we use the concrete set up of the string theory on $T^{6}$  (Case~1)
to briefly review Jacobi forms and some of their properties useful for our application.
Using the same set up, we then discuss the lift to five dimensions, and the different
ensembles of rotating black holes. In \S\ref{1/44d}, we address the theories on $K3\times T^{2}$
(Case~2), and analyze the explicit mock Jacobi forms which arise for index $m=1,2$.
In~\S\ref{mone}, we prove the positivity property for~$m=1$, and in \S\ref{mtwo}, and \S\ref{alter},
we prove the positivity property for~$m=2$ in two different ways.
In an appendix, we give some tables listing the first few values of the black hole
degeneracies in Case~2 for~$m=1,2,3,4$.

\section*{Acknowledgements}
The research of the first author was supported by the Alfried Krupp Prize for Young University
Teachers of the Krupp Foundation. The research of the second author was supported by
the ERC Advanced Grant no. 246974, {\it ``Supersymmetry: a window to non-perturbative physics''}.
We thank the ASICTP, Trieste for hospitality during the 2011 conference on `Modular Forms and mock
modular forms, and their applications in arithmetic, geometry, and physics' where this research was initiated.
It is a pleasure to thank Atish Dabholkar, Ashoke Sen, and Don Zagier for useful discussions.

\section{Black hole degeneracies and Jacobi forms \label{1/8}}

\subsection{Review of Jacobi forms}

The black hole microstate degeneracies in all the cases that we study are related to Fourier
coefficients of Jacobi (or mock Jacobi) forms. We therefore begin by recalling
a few relevant facts about Jacobi forms \cite{Eichler:1985ja}.
We use the notation $e(x) :=e^{2 \pi \i x}$,
$q:=e(\t)$, and $\zeta := e(z)$, which is fairly standard in the modular forms literature.

\vspace{0.2cm}

\begin{definition}
A \emph{Jacobi} form of  weight $k\in \Z$ and index $m\in \Z$  is a  holomorphic function $\varphi:\mathbb{H} \times\C \to \C$ which
is ``modular in $\tau$ and elliptic in $z $'' in the sense that it transforms under the modular group as
  \be\label{modtransform}  \varphi \left(\frac{a\t+b}{c\t+d}, \frac{z}{c\t+d}\right) \ =
   (c\t+d)^k \, e\left({\frac{m c z^2}{c\t +d}}\right) \, \varphi(\t,z)  \qquad  \quad
\left(  \forall \left(\begin{smallmatrix} a & b \\ c & d   \end{smallmatrix} \right)    \in \rm{SL}_2(\mathbb{Z}) \right)
\, ,  \ee
and under the translations of $z$ by $\mathbb{Z} \tau + \mathbb{Z}$ as
  \be\label{elliptic}  \varphi(\t, z+\lambda\tau+\mu)\= e\left({-m\left(\lambda^2 \t + 2 \lambda z\right)}\right) \varphi(\t, z)
  \qquad \left( \forall  \l,\,\m \in \mathbb{Z} \right).
  \ee
\end{definition}

\vspace{0.2cm}

\ndt \emph{Fourier expansion:}
Equations \eqref{modtransform} and (\ref{elliptic}) include the periodicities $\varphi(\t+1,z) = \varphi(\t,z)$ and $\varphi(\t,z+1) = \varphi(\t,z)$, thus  $\varphi$ has a Fourier expansion
  \be\label{fourierjacobi}  \nonumber
  \varphi(\t,z) \= \sum_{n, r} c(n, r)\,q^n\,\zeta^r.
   \ee
Equation \eqref{elliptic} is then equivalent to the periodicity property
  \be\label{cnrprop}
   c(n, r) \= C\left(4 n m - r^2, r\right),
\text{where $C(\D,r)$ depends only on $r \mypmod{2m}$.}
  \ee

The function $\v(\tau, z)$ is called a  \emph{holomorphic Jacobi form} (or simply a \emph{Jacobi form})
of weight $k$ and index $m$ if if it satisfies the condition
  \be\label{holjacobi}  c(n, r) \= 0 \qquad \textrm{unless} \qquad 4mn \ge r^2\,. \ee
The function  is called a \emph{weak Jacobi form} if it satisfies the condition
  \be\label{weakjacobi} c(n, r) \= 0\qquad   \textrm{unless}  \qquad n \geq 0 \, .\ee
The Jacobi forms that arise as the generating functions of black hole degeneracies
are always weak which is related to the fact that the condition \eqref{weakjacobi}
is equivalent to an exponential growth of $C(\D,r)$ as $\D \to \infty$.

\vspace{0.2cm}

\ndt \emph{Theta expansion:}
Using  the transformation property (\ref{elliptic})  one obtains that the
Fourier expansion of a Jacobi  form may be written as
  \be\label{jacobi-Fourier}
  \v(\t, z) \= \sum_{\ell\inn \IZ} \;q^{\frac{\ell^2}{4m}}\;h_\ell(\t) \; e(\ell z) \, ,
   \ee
where $h_\ell(\tau)$ is periodic in $\ell$ with period $2m$.  In terms of the coefficients \eqref{cnrprop} we have for $\ell \inn \IZ/2m \IZ$
  \be\label{defhltau} \nonumber
   h_{\ell}(\t) \= \sum_{\D} C_{\ell}(\D) \,  q^{\frac{\D}{4m}} \, .
    \ee
Because of the periodicity property of $h_{\ell}$, equation \eqref{jacobi-Fourier} can be rewritten in the form
  \be\label{jacobi-theta}
  \v(\t,z) = \sum_{\ell\inn \IZ/2m\IZ} h_\ell(\t) \, \vartheta_{m,\ell}(\t, z)\,, \ee
where $\vartheta_{m,\ell}(\t,z)$ denotes the standard index $m$ theta function
  \be \label{thetadef}
   \vartheta_{m,\ell}(\t, z)
   \;:=\; \sum_{{\l\inn\IZ} \atop {\l\,\equiv\,\ell\,\mypmod{2m}}} q^{\frac{\l^2}{4m}} \, \zeta^\l \, \,
   \= \sum_{n \inn \mathbb{Z}} \,q^{m\left(n+ \frac{\ell}{2m}\right)^2} \,  \zeta^{\ell + 2mn} \, .
    \ee
The expansion~\eqref{jacobi-theta} is called  the \emph{theta expansion} of $\v$.
The vector $h := ( h_1, \ldots, h_{2m})$ transforms like a modular form of weight $k-\frac{1}{2}$ under SL$_2(\IZ)$ with respect to the Weyl representation.

\vspace{0.2cm}

\ndt \emph{Hecke-like operators:}
We will require   the Hecke-like operator $V_{t}$ ($t \ge 1$),
which sends Jacobi forms of weight $k$ and index $m$ to Jacobi forms of weight $k$ and index $t m$.
It is given in terms of its action on Fourier coefficients by
  \be\label{defVl} V_t\,: \quad \sum_{n,r} c(n,r)\,q^n\,\zeta^r \quad \mapsto \quad
    \sum_{n,r} \left( \sum_{d|(n,r,t)} d^{k-1} c\left(\frac{nt}{d^2},\frac{r}{d}\right) \right) q^n\,\zeta^r \ . \ee

\vspace{0.2cm}

\ndt \emph{Jacobi forms of index one:}
  If $m=1$, \eqref{cnrprop} reduces to $c(n,r) = C\left(4n-r^2\right)$,
   where $C(\D)$ is a function of a single argument.
Two examples of index~1 Jacobi forms, which play an important role in the theory, are the following two
Jacobi forms of weight $-2$ and $0$, respectively:
\be\label{phi2}
A(\tau, z)=\v_{-2,1}(\tau, z) :=  \frac{\vth_1^2(\t, z)}{\eta^6(\t)} \, ,
\ee
 \be\label{phi0}
 B(\tau, z)=\v_{0, 1} (\t, z) := 4 \left( \frac{\vth_2^2(\t, z)}{\vth_2^2(\t)} +
    \frac{\vth_3^2(\t, z)}{\vth_3^2(\t)} +\frac{\vth_4^2(\t, z)}{\vth_4^2(\t)} \right) \, ,
\ee
where $\vartheta_{i}, i=1,\dots,4$ are the four classical Jacobi theta functions
\begin{align}
 \nonumber
 & \vartheta_1(\tau, z) :=\sum_{n\in\Z}(-1)^nq^{\frac12\left(n-\frac12\right)^2}\zeta^{n-\frac12}, \, \qquad
\vartheta_2(\tau, z) :=\sum_{n\in\Z}q^{\frac12\left(n-\frac12\right)^2}\zeta^{n-\frac12},\\
 \nonumber
 & \qquad \vartheta_3(\tau, z) :=\sum_{n\in\Z}q^{\frac{n^2}{2}}\zeta^n, \,  \qquad \qquad \qquad
\vartheta_4(\tau, z) :=\sum_{n\in\Z}(-1)^nq^{\frac{n^2}{2}}\zeta^n \,\nonumber ,
\end{align}
and~$\eta$ is the Dedekind eta function
\be
\eta(\tau) := q^{\frac{1}{24}} \prod_{n\geq 1}(1 - q^{n}) \, .  \label{defeta}
\ee

By the property mentioned above, these functions have a Fourier expansion ($k=-2,0$):
\be\label{phiC} \v_{k, 1}(\tau, z) \= \sum_{n,\,r\in\Z} C_k(4n -r^2)\,q^n\,\zeta^r.
 \ee
The first few Fourier coefficients of $A$ and $B$ are given in Table \ref{tablefcoeffs} below.
Note the alternating sign pattern of~$C_{k}(\D)$. This is related to the positivity of the black hole
degeneracies, and we will prove that this is true for all~$\D>0$.

\begin{table}[h]   \caption{\small{The first few Fourier coefficients of $A$ and $B$}}   \vspace{5pt}    \centering
   \begin{tabular}{|c|ccccccccc|ccccc}   \hline
        $k$ &  $C_{k}(-1)$ & $C_{k}(0)$ &$C_{k}(3) $& $C_{k}(4)$& $C_{k}(7)$ & $C_{k}(8)$ &$C_{k}(11)$& $C_{k}(12)$& $C_{k}(15)$ \\
       \hline $-2$ & 1 & $-2$ &8 &$-12$&39&$-56$&152&$-208$&513\\   $0$ &  1  & 10 &$-64$&108&$-513$&808&$-2752$&4016&$-11775$\\
     \hline    \end{tabular}   \label{tablefcoeffs}   \end{table}

It is a fact that $A$ and $B$ generate the ring of weak Jacobi forms of even weight
freely over the ring of modular forms of level~$1$  \cite{Eichler:1985ja}, which means that any weak Jacobi form can be
written as a sum of products of $A$ and $B$ with coefficients being modular forms.

\subsection{1/8 BPS black holes in type II string theory on $T^{6}$}

On compactifying type-II string on a 6-torus $T^{6}$,
the resulting four-dimensional theory has $\CN=8$ supersymmetry with $28$ massless $U(1)$ gauge fields.
A charged state is therefore characterized by $28$ electric and $28$ magnetic charges which combine
into the $\bf 56$ representation of the U-duality group $E_{7, 7} (\mathbb{Z})$.
We are interested in one-eighth BPS dyonic states in this theory which perserve four of the thirty-two supersymmetries.

We consider the 6-torus to be the product  $T^4 \times S^{1} \times \wt S^{1}$ of a 4-torus and
two circles. Using the U-duality, we can go to a frame where the four dimensional system contains
$Q_5$ D5-branes along $T^4\times S^1$, $Q_1$ D1-branes along $S^1$, and K Kaluza-Klein monopoles
associated with $\wt S^1$, carrying $n$ units of momentum along $S^1$ and $J$ units of momentum
along $\wt S^1$. The black holes are thus labeled by these five charges $(Q_{1}, Q_{5}, n, K, J)$.

We restrict our analysis here to the case $\gcd\big(Kn, Q_1Q_5, KQ_1, KQ_5, nQ_1, nQ_5\big)=1$.
The degeneracies of the 1/8-BPS dyonic states in the type II string theory on a $T^{6}$
are given in terms of the Fourier coefficients of $A(\tau, z)$
\cite{Maldacena:1999bp, Pioline:2005, Shih:2005qf, Sen:2008ta}:
\be \label{et72} \nonumber
d^{(1b)}_{\rm micro} (Q_{1}, Q_{5}, K, n,J) =  (-1)^{J+1}
\sum_{s|Q_{1}, nQ_{5}, J} s \, C_{-2} \left(\left(4Q_1 Q_5 K n -J^{2}\right)/s^{2}\right)\, ,
\ee
where $C_{-2}(D)$ is defined in equation \eqref{phiC}.
The factor of $(-1)^{J+1}$ arises due to the fermion zero modes mentioned in the introduction, which
one has to strip off since they live outside the horizon.

To read off $C_{-2}(D)$ more systematically, we use the  theta expansion
\be\label{jacobi-theta2} \nonumber
A(\t,z)  = h_0(\t) \, \vartheta_{1,0}(\t, z)\, +  h_1(\t) \, \vartheta_{1,1}(\t, z)\,  .
\ee
The functions $h_{\ell}(\tau)$ in this case are given explicitly by:
\bea \label{h0h1defs1}
\nonumber h_{0} (\t) & = & - \frac{\vth_{1,1}(\t,0)}{\eta^{6}(\t)} = -2  -12 q - 56 q^{2}- 208 q^{3} - 684q^4 - 2032q^5 -O\left(q^{6}\right) ,\\
\label{h0h1defs2}
\nonumber h_{1} (\t) & = & \frac{\vth_{1,0}(\t,0)}{\eta^{6}(\t)} =   q^{-\frac{1}{4}}
\left(1 + 8 q + 39 q^{2} + 152q^3 + 513q^4 + 1560q^5 + O\left(q^{6}\right) \right) \, .
\eea
From the definition \eqref{thetadef} of the functions $\vth_{m,\ell}$, and the product
representation~\eqref{defeta} of the function~$\eta$,
it is clear that the Fourier coefficients of $-h_{0}$ and $h_{1}$
are all positive, thus proving the positivity of $d_{\rm micro}$ in this case (1a).

\subsection{Lift to five dimensions, and ensembles with varying and fixed $J_{R}$ \label{5dlift}}

In the above charge representation, zooming in on the tip of the KK monopole gives us
the five dimensional theory. These five dimensional black holes
are therefore labelled by four integers $(Q_{1}, Q_{5}, n, J)$. Near the tip of
the monopole, the circle $\wt S^{1}$ has decompactified, and $J$ becomes an angular momentum charge.
To compute the generating function for the index, one has to remove the modes which are outside
the horizon of the black hole. The only such modes in this case are the bound states of angular
momentum, removing them gives the generating function for $d_{\rm micro}$ \cite{Sen:2011cj}:
\be \label{ei6full}
\nonumber
\sum_J (-1)^{J+1}\, d^{(1a)}_{\rm micro}(n,Q_1,Q_{5},J) \, \zeta^{J} =   \zeta^{-2} \left(\zeta - 1\right)^4 \sum_{j\in \IZ}\sum_{s|n,Q_1 Q_{5},j}
s \, C_{-2} \left({4 Q_1 Q_{5} n - j^2\over s^2}\right) \zeta^{j} \, .
\ee
We have already seen that $(-1)^{d+1}C_{-2}(d)>0$.
The prefactor
$$
\zeta^{-2}(\zeta-1)^{4} = \sum_{r} c_{\rm pf}(r) \, \zeta^{r}
$$
 also has the positivity property
\be
\nonumber
(-1)^{r} c_{\rm pf}(r) >0 \, .
\ee
Putting these two facts together, we get $d^{(1b)}_{\rm micro} >0$.

So far, we have been working with superconformal indices with fixed values for all the charges
including the angular momentum $J$, but in computing the index, one lets $J^{2}$ vary.
In~\cite{Sen:2011cj}, Sen also defined a new index for rotating black holes with fixed value of $J$,
as well as fixed $J^{2}$ (and fixed value of all other charges) as:
\be
\nonumber
d^{\rm rot}_{\rm micro} (\dots,J) := d_{\rm micro} (\dots,J) - d_{\rm micro} (\dots,J+2) \, ,
\ee
where the $\dots$ indicate all the other charges that are held fixed,
and conjectured that this should also be a positive integer.
To prove this,  we need to show that $d^{(1a)}(n,Q_1,Q_{5},J) > d^{(1a)}(n,Q_1,Q_{5},J+2)$.
For the case $(Q_{1}, Q_{5}n) =1$, we need to show that $|C_{-2}(D)|$ are monotonic,
which can be seen from the fact that the function~$\vth_{m,\ell}$ has coefficients one, and
the Fourier coefficients of the function~$\eta^{-6}$, which count partitions, are monotonic.

\section{1/4 BPS black holes in type II string theory on $K3 \times T^{2}$ \label{1/44d}}

The four-dimensional theory in $\R^{1,3}$ resulting from the $K3 \times T^{2}$ compactification has $\mathcal{N}=4$
supersymmetry.
The bosonic duality group of the theory is $SL_{2}(\Z) \times O(22, 6, \mathbb{Z})$, the two factors
are called the $S$-duality group, and the $T$-duality group, respectively.
The integral electric and magnetic charges $(N^{i}, M^{i})$, $(i=1,2,\dots 28)$, are
in a $(2,28)$ representation of this group, and the
degeneracies are written in terms of the T-duality invariants $(N^{2}/2, N\cdot M, M^{2}/2, ) \equiv (n,\ell,m)$,
formed using a certain inner product on the lattice of charges.
The degeneracy formula was first conjectured in~\cite{Dijkgraaf:1996it}, and the complete degeneracy
formula was derived in~\cite{Gaiotto:2005gf, Shih:2005uc, David:2006yn}.
As in the previous subsection, we shall restrict our attention here to the case to primitive charges,
the corresponding formulas for non-primitive charges \cite{Banerjee:2008ri, Banerjee:2008pu, Dabholkar:2008zy}
are related to the primitive degeneracies.

The main novelty (and difficulty) in this case arises because the 1/4-BPS spectrum of the theory depends
not only on the charges, but also the moduli fields at infinity. For given charges $(Q^{i},P^{i})$,
one has, at a generic point in moduli space, not only the dyonic black hole solution,
but also two-centered black hole bound state solutions with the two centers carrying e.g.
electric and magnetic charges \cite{Denef:2000nb}.
These bound states exist only inside a certain region of moduli space bounded by codimension
one surfaces called walls, and cease to exist (decay) on crossing these walls.

On the microscopic side, the (indexed) degeneracies of the 1/4-BPS states are Fourier coefficients of the
meromorphic Siegel modular form $\Phi_{10}^{-1}$, the reciprocal of the Igusa cusp form of weight 10.
The meromorphicity means that the Fourier coefficients depend on the order of expansion, or, in other
words, on the contour of integration one uses to define them. This contour depends on the moduli fields
of the theory, in such a way that the jumps in the degeneracies across the divisors of $\Phi_{10}$
are exactly equal to the degeneracies of the two-centered black hole bound state that decays on
crossing the corresponding wall in moduli space \cite{Cheng:2007ch, Dabholkar:2007vk, Sen:2008ht}.

Our interest is in the degeneracies of the single-centered black hole, and we would like to throw
away the contribution from all the multi-centered black holes to the generating function. This
latter contribution is not modular invariant by itself, and so this step breaks the modular invariance of
the original generating function. However, quite remarkably, the remaining function that one gets has the
property of being a mock Jacobi form \cite{Dabholkar:2011mm},
and this is what we use to prove the positivity of the single centered black hole degeneracies.

\subsection{Wall crossing and mock Jacobi forms}
For basic facts about Siegel modular forms, we refer the reader to~\cite{Freitag}.
The Igusa cusp form $\Phi_{10}$, the unique Siegel modular form of weight~$10$, is the Borcherds
(multiplicative) lift of the function~$2B(\t,z) $:
  \be \label{final2}
   \Phi_{10}(Z) \= q \zeta w \prod_{(n, \ell,m) >0}\left( 1 - q^n \zeta^\ell w^m \right)^{2C_0\left(4mn-\ell^2\right)},
   \ee
where the coefficients $C_0(\D)$ are defined in (\ref{phiC}).
Here the notation $(n,\ell,m)>0$ means that $n,\,\ell,\,m \in\Z$ with
either $m>0$ or $m=0$ and $n >0$, or $m=n=0$ and $l<0$.
In terms of the Hecke-like operators $V_{m}$, (\ref{final2})  can be rewritten in the
form
$$
\Phi_{10}(Z)= w \, \D(\t) A(\tau,z) \, \exp\left(-2\sum_{m\geq 1} B|V_{m}(\tau,z) \, w^{m}\right) \, ,
$$
where $\D(\t)$ is the weight $12$ modular form:
\[
\Delta(\tau):=q\prod_{n\geq 1}\left(1-q^n\right)^{24} = q - 24q^{2} + 252q^{3} - 1472q^{4} + 4830q^{5}+
O\left(q^{6}\right) \, .
\]
The function~$\Phi_{10}$ can also be written as the Saito-Kurokawa (additive) lift of the
Jacobi form~$ \v_{10, 1} = \Delta A$.

We are interested in the Fourier coefficients of the microscopic partition function $\Phi_{10}^{-1}$,
with respect to the three chemical potentials $(\t, z, \t')$ which are conjugate to the three $T$-duality
invariant integers~$(n,\ell,m)$.
The Igusa cusp form has double zeros at~$z=0$ and its $Sp_2(\IZ)$-images. The partition
function is therefore a meromorphic Siegel modular form   of weight
$-10$ with double poles at the divisors. As mentioned above, this meromorphicity
is responsible for the wall-crossing behavior of these functions.

The first step to analyze the Fourier coefficients \cite{Dabholkar:2011mm}
is to expand the microscopic partition function
in $w$:
  \be\label{reciproigusa}
  \frac 1{\Phi_{10}(Z)} \= \sum_{m\geq -1} \psi_m (\t,z) \, w^m  \, .
   \ee
Using \eqref{final2}, one can compute the coefficients $\psi_{m}$.
The first few   are given by \cite{Dabholkar:2011mm}
\bea\label{firstpsim}
\D\,\psi_{-1} & \= & A^{-1}\;, \nonumber \\
\D\,\psi_{0\;}  & \= & 2\,A^{-1}B\;, \nonumber \\
\D \,\psi_{1\;} &\=& \big(9\,A^{-1}B^2 \+ 3 E_4 A \big)/4\;, \\
\D \, \psi_{2\;}  & \= & \big(50 \, A^{-1} B^3 \+  48 E_4 AB \+ 10 E_6 A^2 \big)/27\;,  \nonumber  \\
\D \, \psi_{3\;}  & \= & \big(475 \,  A^{-1} B^4 \+ 886 E_4 AB^2 \+ 360 E_6 A^2 B
\+ 199 E_4^2 A^3 \big)/384\;, \nonumber \\
\D \, \psi_{4\;}  & \= & \big(51 \,  A^{-1} B^5 \+ 155 E_4 A B^3 \+  93 E_6 A^2 B^2
\+ 102 E_4^2 A^3 B \+ 31 E_4 E_6 A^4 \big)/72\; , \nonumber
\eea
where, for even $k\ge2$, the Eisenstein series $E_k$ of weight $k$ is defined as
\begin{equation} \label{Ek}
E_k(\tau):=1-\frac{2k}{B_k}\sum_{n\geq 1} \sigma_{k-1}(n)q^n
\end{equation}
with $B_k$ the $k$th Bernoulli number and $\sigma_{k-1}(n):=\sum_{d|n}d^{k-1}$.
Note that for $k \geq 4$ even the function $E_k$ is a modular form, whereas $E_2$ is a so-called quasimodular form.
The first few Eisenstein series are:
\begin{eqnarray}\label{eisen}
\nonumber   E_2 (\t) &\=& 1 \,-\, 24\,\sum_{n \geq 1} \frac{nq^n}{1- q^n} \= 1- 24q - 72q^2 - O\left(q^3\right) \, , \\
   E_4 (\t) &\=& 1 \+ 240\,\sum_{n \geq 1} \frac{n^3q^n}{1- q^n} \= 1+ 240 q + 2160q^2 + O\left(q^3\right) \, , \\
\nonumber     E_6 (\t) &\=& 1 \,-\, 504\,\sum_{n \geq 1} \frac{n^5q^n}{1- q^n} \= 1 -504q -16632q^2 - O\left(q^3\right) \,.
\end{eqnarray}
The double zero of $\Phi_{10}$ at $z=0$ is reflected by the double zeros of the denominator $A$ in
the $A^{-1} B^{m+1}$ terms in the formulas~\eqref{firstpsim}.
These meromorphic Jacobi forms were analyzed in \cite{Dabholkar:2011mm}, following
a theorem of Zwegers \cite{Zagier:2007,Zwegers:2002} who showed that the Fourier coefficients
of meromorphic Jacobi forms are related to mock modular forms.  The analysis, which we sketch below,
uniquely associates a mock Jacobi form (first systematically studied by the first author and Richter~\cite{BR}) to a meromorphic Jacobi form of the type $\psi_{m}$ above.

The first step is to define the polar part of $\psi_{m}$
\be\label{Tm}
\nonumber
 \psi^{P}_m (\tau, z) := \; \frac{p_{24}(m+1) }{\eta^{24}(\tau)} \, \sum_{s\in\Z} \, \frac{q^{ms^2 +s}\zeta^{2ms+1}}{(1 -\zeta q^s )^2} \ ,
\ee
where $p_{24}(n)$ counts the number of   partitions of an integer $n$ allowing  $24$ colors.
The function~$\psi^{P}_{m}$ is the average over the lattice $ \mathbb{Z} \tau + \Z$ of the leading behavior of the function
near the pole $z=0$
  \be \label{simplewall} \nonumber
  \frac{p_{24}(m+1)}{\eta^{24}(\tau)} \frac{\zeta}{(1-\zeta)^2}\,. \ee
The function $\psi_{m}^{P}$ is an example of an Appell-Lerch sum, and it encodes the physics of all the wall-crossings due to the decay of two-centered black holes.

The single-centered black hole degeneracies are found by subtracting the polar part from the original
meromorphic Jacobi form $\psi_{m}$. The two functions $\psi_{m}$ and $\psi_{m}^{P}$ have, by
construction, the same poles and residues, so the difference is holomorphic in $z$, and has an
unambiguous Fourier expansion. The finite or Fourier part of $\psi_{m}$
\be \nonumber
\psi_{m}^{F} := \psi_{m} - \psi_{m}^{P} \, ,
\ee
is a mock Jacobi form of index $m$. It was shown in~\cite{Dabholkar:2011mm} that the
indexed degeneracies of the single centered black hole of magnetic charge invariant~$N^{2}/2=m$,
as defined by the attractor mechanism, are Fourier coefficients of the function~$\psi_{m}^{F}$.
More precisely, we have that the microscopic indexed degeneracies~$d_{\rm micro}(n,r,m)$
corresponding to the single-centered black holes are related to the Fourier coefficients
of this function $\psi_{m}^{F} = \sum_{n,r} c(n,r) q^{n} \zeta^{r}$, as $d_{\rm micro}(n,r,m) = (-1)^{r+1} c(n,r)$,
the overall sign coming from an analysis of the fermion zero modes described in the introduction.
We  now  analyze the positivity of the numbers~$d_{\rm micro}(n,r,m)$.

We  work out the first two cases $m=1,2$.
The analysis of  \cite{Dabholkar:2011mm}  explicitly identified the mock Jacobi forms arising
as the finite parts of the meromorphic Jacobi forms for many cases.
We have the following explicit formulas
for the finite parts of the mock Jacobi forms $B^{m+1}/A$:
\begin{align}
\nonumber \left(\frac{B^2}{A}\right)^F&=E_4A-288 \mathcal{H} \, ,\\
\nonumber \left(\frac{B^3}{A}\right)^F&=3E_4AB-2E_6A^2-12^3\mathcal{H}|V_2 \, ,
\end{align}
in terms of the Hecke-like operator defined in~\eqref{defVl}, and the
function
$$
\CH(\tau,z):=\sum_{n,r} H(4n-r^{2}) q^{n} \zeta^{r},
$$
where for $N \geq 0$,  $H(N)$ denotes  the Hurwitz-Kronecker
class numbers. The function $\CH$ can be expanded as:
\begin{equation}
\nonumber
\mathcal{H}(\tau, z):=\mathcal{H}_0(\tau) \, \vartheta_{1, 0}(\tau, z)+\mathcal{H}_1(\tau) \, \vartheta_{1, 1}(\tau, z) \, ,
\end{equation}
where
\begin{align}
\nonumber
\mathcal{H}_j(\tau)&:=\sum_{n\geq 0} H(4n+3j)\, q^{n+\frac{3j}{4}} \, .
\end{align}
From work of Hirzebruch and Zagier \cite{HZ,Za} one can conclude that these functions are mock modular forms.
For later purposes, we note that  $H(0)=-1/12$ and $H(n)>0$ for $n \in \N$.

Using the formulas in \eqref{firstpsim}, we get:
\bea
 \psi_1^F &= & \frac1\Delta(3 E_4A- 648\mathcal{H}) \, , \label{m1}\\
 \psi_2^F&= & \frac{1}{3\Delta}\big(22E_4 AB-10 E_6A^2-9600 \mathcal{H}|V_2 \big) \, .\label{m2}
\eea
In the next few sections, we  prove that the coefficients
$c(n,r)$ of these two functions $\psi_1^F$, $\psi_2^F$ obey the positivity
property\footnote{For general~$m$, the $(n,r)$ Fourier coefficient of $\psi_{m}$ has an obvious
black hole interpretation for~$n \ge m$. For~$m=2$, the $n=1$ coefficient also has a black hole
interpretation, as can be seen from the table in the appendix. (The~$(n,r)=(1,1)$ coefficient of~$\psi_{2}$
are equal to the~$(2,1)$ coefficient of~$\psi_{1}$, and the~$(1,2)$ coefficient of~$\psi_{2}$
are equal to the~$(1,0)$ coefficient of~$\psi_{1}$). The general pattern remains to be
fleshed out fully.}:
\begin{equation}\label{positive}
(-1)^{r+1} c(n,r) > 0 \quad \text{for $4mn-r^{2}>0$}.
\end{equation}
The relation of the microscopic degeneracies of these five dimensional black holes~\cite{Castro:2008ys,
Banerjee:2008ag, Dabholkar:2010rm}
to their four dimensional counterparts in the $\CN=4$ theories is exactly as in the $\CN=8$ theories,
as described in \S\ref{5dlift}. The positivity in the ensemble with varying $J^{2}$ simply follows
 from the positivity of the four dimensional case. In the ensemble with fixed $J^{2}$,
one needs to show that the Fourier coefficient $c(n,r)$ of the functions $\psi^{F}_{1}$, $\psi^{F}_{2}$
obey the property $c(n,r) > c(n,r+2)$. This property will also be seen to be true in the course of
presenting the proofs below.

\section{The positivity property for $m=1$ \label{mone}}
In this section, we show (\ref{positive}) for $m=1$.  By (\ref{m1}), we have that
\[
\frac13\Delta(\tau)\psi_1^F(\tau, z)=E_4(\tau)A(\tau, z)-216\mathcal{H}(\tau, z).
\]
A direct calculation shows that $A$ has the following theta decomposition
\begin{equation} \label{Atheta}
A(\tau, z)=\frac{1}{\eta^6(\tau)} \left(\theta_0(\tau)\vartheta_{1, 1}(\tau, z)-\theta_1(\tau)\vartheta_{1, 0}(\tau, z)\right).
\end{equation}
Here we define for  $j \in \{0,1\}$
\begin{equation*}
\theta_j (\tau):=\vartheta_{1, j}(\tau, 0).
\end{equation*}
This immediately implies that
\[
\psi_1^F(\tau, z)=k_1(\tau) \vartheta_{1, 1}(\tau, z)-k_0(\tau)\vartheta_{1, 0}(\tau, z)
\]
with
\begin{align*}
k_1(\tau)&:=\frac{3}{\Delta(\tau)}\left(\frac{E_4(\tau)\theta_0(\tau)}{\eta^6(\tau)}-216\mathcal{H}_1(\tau)\right),\\
k_0(\tau)&:=\frac{3}{\Delta(\tau)}\left(\frac{E_4(\tau)\theta_1(\tau)}{\eta^6(\tau)}+216\mathcal{H}_0(\tau)\right).
\end{align*}
To prove (\ref{positive}),   we have to show that the positive Fourier coefficients of $k_1$ and $k_0$ are positive. To treat the coefficients of $k_0$, we require the following general lemma. For this define as usual
\[
(q; q)_\infty=(q)_\infty :=\prod_{\ell\geq 1}\left(1-q^\ell\right).
\]

\begin{lemma}\label{multD}
Assume that $f(q)=\sum_{n\geq 0}a(n) q^n$ satisfies $a(n)>0$ for $n\geq n_0$ $(n_0\in\N_0)$
and that for for $0\leq j\leq n_0-1$ and $\ell\in\N$ we have that $a(j+\ell n_0)>k|a(j)|$ for some $k\in\N$. Then
the function $\frac{f(q)}{(q)_\infty^k}$ has positive  coefficients for $ n \geq n_0$.
\end{lemma}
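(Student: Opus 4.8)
The plan is to write $\frac{1}{(q)_\infty^k} = \sum_{n\geq 0} p_k(n) q^n$, where $p_k(n)$ counts partitions into parts of $k$ colours, so that all $p_k(n) > 0$ for $n \geq 0$ (with $p_k(0)=1$) and the sequence $(p_k(n))_{n\geq 0}$ is non-decreasing. Then the $n$th coefficient of $\frac{f(q)}{(q)_\infty^k}$ is the convolution $\sum_{i=0}^{n} a(i)\, p_k(n-i)$. The strategy is to split this sum into the contribution of the ``bad'' indices $i$ (those $0 \leq i \leq n_0-1$ with $a(i) \leq 0$, or more conservatively all $i$ in the initial block $0 \leq i \leq n_0-1$) and the contribution of the ``good'' indices $i \geq n_0$, and to show that the good part dominates.

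First I would fix $n \geq n_0$ and write $n = j_0 + \ell_0 n_0$ with $0 \leq j_0 \leq n_0 - 1$ and $\ell_0 \geq 1$. For each residue class $j$ with $0 \leq j \leq n_0-1$, I pair the (potentially negative) term $a(j) p_k(n-j)$ with the strictly positive term $a(j + n_0) p_k(n - j - n_0)$ coming from the next element of the same arithmetic progression; by hypothesis $a(j+n_0) > k|a(j)| \geq |a(j)|$, and by monotonicity of $p_k$ we have $p_k(n-j-n_0) \geq p_k(n-j) - (\text{something})$ — more carefully, since $n - j - n_0 \geq 0$ whenever $\ell_0 \geq 1$, and $p_k$ is non-decreasing, we at least have $p_k(n-j-n_0) \geq p_k(m)$ for all $m \leq n-j-n_0$. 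The cleanest route: bound $a(j)p_k(n-j) \geq -|a(j)| p_k(n-j)$ and then use that $p_k(n-j) \leq p_k(n-j-n_0) + \big(p_k(n-j) - p_k(n-j-n_0)\big)$; I would instead observe directly that for $n \geq n_0$ the term $a(j+n_0)p_k(n-j-n_0)$ alone satisfies $a(j+n_0)p_k(n-j-n_0) > k|a(j)| p_k(n-j-n_0) \geq |a(j)| p_k(n-j-n_0)$, and crucially $p_k(n-j) = p_k(n-j)$ while $p_k(n-j-n_0)$ could be smaller. So the honest inequality needs the extra factor $k$: I use $\frac{1}{k}$ of the term $a(j+n_0)p_k(n-j-n_0)$ against $|a(j)|p_k(n-j)$? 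That fails because $p_k(n-j) > p_k(n-j-n_0)$ in general. The correct fix is to note there are exactly $\ell_0 \geq 1$ good terms $a(j + t n_0)$, $1 \leq t \leq \ell_0$, in the progression through $j$, each with coefficient $> k|a(j)|$, and $p_k$ non-decreasing gives $p_k(n - j - t n_0) \geq p_k(0) = 1$; meanwhile $p_k(n-j) \leq p_k(n) = \sum_{t=0}^{\ell_0'} (\dots)$ — this is where I would instead invoke the sharper fact that $p_k(N) \leq p_k(N - n_0) + p_k(N-1)$ is \emph{not} what I want. Let me restructure: the efficient argument pairs $a(j)p_k(n-j)$ with $a(j+n_0)p_k(n-j-n_0)$, bounds their sum below by $\big(a(j+n_0) - |a(j)|\big) p_k(n-j-n_0) - |a(j)|\big(p_k(n-j) - p_k(n-j-n_0)\big)$, and then absorbs the second, smaller defect term $|a(j)|\big(p_k(n-j) - p_k(n-j-n_0)\big)$ into the remaining $k-1$ surplus copies of $a(j+n_0) p_k(n-j-n_0)$ using $a(j+n_0) > k|a(j)|$ together with the telescoping bound $p_k(n-j) - p_k(n-j-n_0) \leq (n_0-1)$-fold increments $\leq$ a controllable multiple of $p_k(n-j-n_0)$ — this last estimate on consecutive differences of $p_k$ is the technical crux.

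Therefore the step I expect to be the main obstacle is controlling the ``defect'' $p_k(N) - p_k(N-n_0)$ relative to $p_k(N-n_0)$, i.e. showing that the growth of $p_k$ over a window of length $n_0$ is not too violent compared to its value — this is precisely the point where the factor $k$ in the hypothesis $a(j+\ell n_0) > k|a(j)|$ is consumed. I would handle it by the crude but sufficient bound $p_k(N) = \sum_{i=0}^{n_0-1} p_k(N - n_0 \lceil i/? \rceil)\cdots$; concretely, iterating $p_k(N) \leq p_k(N-1) + (\text{coefficient of }q^N\text{ in }\frac{q}{(q)_\infty^k}\cdot\frac{1}{1-q})$ is messy, so I would instead use the generating-function identity $\frac{1}{(q)_\infty^k} = \frac{1}{(q)_\infty^k}$ directly: since $\frac{1}{(q)_\infty^{k}}\cdot(1-q^{n_0}) \cdot \frac{1}{1-q^{n_0}}$, one has $p_k(N) - p_k(N-n_0) = [q^N]\,(1-q^{n_0})/(q)_\infty^k$, and $(1-q^{n_0})/(q)_\infty^k$ still has non-negative coefficients (it equals $1/\big((q)_\infty^{k-1}(q;q^{n_0}... )\big)$ after cancelling one factor appropriately) — once I pin down that this auxiliary series has non-negative coefficients bounded by $p_k(N-n_0)$ up to a constant depending only on $k$ and $n_0$, the pairing argument closes and every coefficient for $n \geq n_0$ comes out strictly positive. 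The remaining bookkeeping — that the leftover ``all-good'' terms $a(i)p_k(n-i)$ with $i \geq n_0$ and $i \not\equiv$ any bad class contribute a further positive amount, so that the inequality is strict — is routine.
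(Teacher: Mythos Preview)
Your direct convolution approach runs into a genuine obstacle that you correctly identify but do not resolve. Pairing $a(j)\,p_k(n-j)$ against $a(j+n_0)\,p_k(n-j-n_0)$ requires controlling the ratio $p_k(N)/p_k(N-n_0)$, and the only leverage the hypothesis gives is the single factor~$k$. But this ratio is \emph{not} bounded by~$k$ (or by any constant independent of~$N$): for $N$ close to $n_0$ it blows up, since $p_k(n_0)$ already grows like a polynomial in~$k$ while $p_k(0)=1$. Worse, treating residue classes separately fails outright for $n_0\le n<2n_0$: when $n-n_0<j\le n_0-1$ the only term of the $j$-progression with index $\le n$ is the bad one $a(j)\,p_k(n-j)$, and there is nothing to pair it with. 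So the per-class argument cannot close, and letting the classes interact destroys the structure you set up. Your suggestion to bound $p_k(N)-p_k(N-n_0)$ via $(1-q^{n_0})/(q)_\infty^k$ does not help either: those coefficients are positive, but they are not dominated by $p_k(N-n_0)$ up to a factor of~$k$.

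The paper proceeds by a completely different organisation. After normalising to the case $f(q)=-1+\sum_{n\ge1}a(n)q^n$ with $a(n)>k$, it does \emph{not} convolve with $p_k$ all at once; instead it peels off the Euler factors one at a time. Setting $f_1:=f$ and $f_{r+1}:=f_r/(1-q^r)^k$, one shows by induction on $r$ that $f_r$ keeps the shape $-1+\sum_{n\ge1}a_r(n)q^n$ with $a_r(n)>0$ for $1\le n<r$ and $a_r(n)>k$ for $n\ge r$. The entire strength of the hypothesis is spent at the first step $r=1$, where the explicit coefficients $\binom{\ell+k-1}{k-1}$ of $(1-q)^{-k}$ and the elementary inequality $k\binom{n+k-2}{k-1}\ge\binom{n+k-1}{k-1}$ do the work; for $r\ge2$, multiplying by $(1-q^r)^{-k}$ only increases coefficients at indices $\not\equiv0\pmod r$, while on the multiples of $r$ one is thrown back to another copy of the $r=1$ case. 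This inductive peeling of the infinite product is the idea your proposal is missing.
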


\begin{proof}
By normalizing and  splitting the coefficients of $f$ into residue classes $\mypmod{n_0}$, we may assume that
\[
f(q)=-1+\sum_{n\geq 1} a(n)q^n
\]
satisfies $a(n)>k$. We may view this function as  the $r=1$ case of the more general family of functions
\begin{equation}\label{fr}
f_r(q)=-1+\sum_{n\geq 1} a_r(n)q^n
\end{equation}
that satisfies $a_r(n)>0$ for $1\leq n\leq r$ and $a_r(n)>k$ for $n\geq r$. To be more precise, we  define the functions $f_r$ inductively as
\[
f_{r+1}(q):=\frac{1}{(1-q^r)^k}f_r(q).
\]
Note that  the coefficients of $f_{r+1}$ that are not divisible by $r$ may be bounded below by those of $f_r$. The remaining coefficients have the shape of $f_1$ (with $q\mapsto q^r$) thus the claim follows inductively as soon as we show it for $r=1$. for this recall that
\[
\frac{1}{(1-q)^k}=\sum_{\ell\geq 0}\binom{\ell+k-1}{k-1}q^\ell.
\]
Thus in $f_2$, the first coefficient equals $a(1)-k>0$ and for $n>1$, the $n$-th coefficient equals
\[
\sum_{0\leq j\leq n-1}\binom{j+k-1}{k-1} a(n-j) -\binom{n+k-1}{k-1}\geq a(n) +\binom{n+k-2}{k-1}k-\binom{n+k-1}{k-1}>k.
\]
This yields the claim of the lemma.
\end{proof}

To apply Lemma \ref{multD}, we write
\[
\frac{q}{48}k_0(\tau)\Delta(\tau)=-1+\sum_{n\geq 1}a(n) q^n.
\]
Since the coefficients of $\theta_1/\eta^6$ are non-negative, and the class numbers~$H(n)>0$ for
$n>0$, we may, using (\ref{Ek}), bound the coefficients $a(n)$ for $n > 1$  by
\[
a(n)\geq    15\sigma_3(n)>24,
\]
and we can check that~$a(1)>24$.
Thus we directly obtain from Lemma \ref{multD} with $k=24$ and $n_0=1$ that the~$n>0$ coefficients of~$k_0$ are positive.

We next turn to $k_1$. It is clearly enough to show that for  $n>0$  the $n$th coefficient of $\frac{q^{\frac14}}{24}\Delta k_1$ is positive. This may be bounded from below by
\begin{equation}\label{s against H}
10\sigma_3(n)-9H(4n-1) .
\end{equation}
Clearly
\[
\sigma_3(n)\geq n^3.
\]
Moreover, it is not hard to show that
\begin{equation} \label{classtrivial}
H(n)<n.
\end{equation}
Thus \eqref{s against H} may be bounded from below by
\[
10 n^3-9(4n-1),
\]
which is positive for $n\geq 2$. The claim then follows since
$$
\frac{1}{24}\Delta (\tau) k_1 (\tau)=
q^{-\frac14} \left(1+176q+O\left( q^2\right) \right).
$$

\section{The positivity property for $m=2$ \label{mtwo}}
In this section we prove (\ref{positive}) for $m=2$ relying on the
Circle Method and asymptotic formulas as shown by Manschot and the
first author \cite{BM}.
In the next section we will present a second, more elementary proof. Both proofs
make use of the theta decompositions of the functions involved.
\subsection{Certain theta decompositions}
We first show the following theta decomposition
\begin{equation}\label{ThetaD}
\frac{1}{\eta^6(\tau)}\left(11E_4(\tau) A(\tau, z) B(\tau, z)-5E_6(\tau) A^2(\tau, z)\right)
=\sum_{0\le j\le 3} h_j (\tau)\vartheta_{2, j}(\tau, z)
\end{equation}
with
\begin{align*}
h_0(\tau)&:=-\frac{1}{\eta^{18}(\tau)}\left(\theta_0(2\tau)\theta_1(\tau)f_0(\tau)+\theta_1(2\tau)\theta_0(\tau)f_1(\tau)\right),\\
h_1(\tau)&:=h_3(\tau)=\frac{1}{2\eta^{18}(\tau)}\theta_1\left(\frac{\tau}{2}\right)\left(f_0(\tau)\theta_0(\tau)+f_1(\tau)\theta_1(\tau)\right),\\
h_2(\tau)&:=-\frac{1}{\eta^{18}(\tau)}\left(\theta_1(2\tau)\theta_1(\tau)f_0(\tau)+\theta_0(2\tau)\theta_0(\tau)f_1(\tau)\right).
\end{align*}
Here
\begin{align} \label{definef0}
f_0(\tau)&:=264 \theta_1'(\tau)E_4(\tau)+\left(5E_6(\tau)-11E_2(\tau)E_4(\tau)\right)\theta_1(\tau),\\
\label{definef1}
f_1(\tau)&:=264 \theta_0'(\tau)E_4(\tau)+\left(5E_6(\tau)-11E_2(\tau)E_4(\tau)\right)\theta_0(\tau),
\end{align}
where the prime denotes $\frac{1}{2 \pi i} \frac{d}{d\t}$.
In particular the above representations imply that
\begin{align}
h_0(\tau) = & -q^{-\frac14} \Big(228+39096 q+1205988 q^2+21844152 q^3+278145540 q^4+2742795528 q^5 \nonumber \\
&\qquad \qquad +22290285288 q^6 +155617854912 q^7+960737806812 q^8 + O\big(q^{9}\big)\Big)\label{h0},\\
h_1(\tau)=&q^{-\frac38}\Big(108+15420 q+669192 q^2+14367108 q^3+198499812 q^4+2050094076 q^5
 \nonumber \\
&\qquad\qquad +17163958500 q^6+122388860268 q^7+767849126316 q^8+O\big(q^{9}\big)\Big)\label{h1},\\
h_2(\tau)=&-q^{-\frac34}\Big(-6-4020 q+81390 q^2+4075236q^3+72603588q^4+ 856025184q^5
 \nonumber \\
& \qquad\qquad + 7805050218q^6 + 59195535780q^7 + 389556957342q^8 + O\big(q^{9}\big)\Big)\label{h2}.
\end{align}
To prove \eqref{ThetaD}, we first recall the theta decomposition (\ref{Atheta}) of $A$.
To find the theta decomposition of $B$, we write
\begin{equation}\label{BTheta}
B(\tau, z)=g_0(\tau)\vartheta_{1, 0}(\tau, z) + g_1(\tau) \vartheta_{1, 1}(\tau, z).
\end{equation}
Since $B$ is a Jacobi form of weight $0$ and index $1$, the functions $h_j$ are components of
a 2-dimensional vector valued modular form which one can show lies in a 1-dimensional space. From this
one may conclude that
\begin{align}
g_0(\tau)&=\frac{1}{\eta^6(\tau)}\left(24\theta_1'(\tau)-E_2(\tau)\theta_1(\tau)\right)\label{g0},\\
g_1(\tau)&=\frac{1}{\eta^6(\tau)}\left(-24\theta_0'(\tau)+E_2(\tau)\theta_0(\tau)\right)\label{g1}.
\end{align}
This yields that
\begin{equation}\label{Thetasecond}
11 E_4(\tau)B(\tau, z)-5E_6(\tau)A(\tau, z)=\frac{1}{\eta^6(\tau)}\left(f_0(\tau)\vartheta_{1, 0}(\tau, z)-f_1(\tau)\vartheta_{1, 1}(\tau, z)\right),
\end{equation}
with $f_0$ and $f_1$ defined in (\ref{definef0}) and (\ref{definef1}), respectively.
Multiplying \eqref{Atheta} and \eqref{BTheta} and using that
\begin{align*}
\vartheta_{1, 0}^2(\tau, z)&=\theta_0(2\tau)\vartheta_{2, 0}(\tau, z)+\theta_1(2\tau)\vartheta_{2, 2}(\tau, z),\\
\vartheta_{1, 1}^2(\tau, z)&=\theta_1(2\tau)\vartheta_{2, 0}(\tau, z)+\theta_0(2\tau)\vartheta_{2, 2}(\tau, z),\\
\vartheta_{1, 0}(\tau, z)\vartheta_{1, 1}(\tau, z)&=\frac12\theta_1\left(\frac{\tau}{2}\right)\left(\vartheta_{2, 1}(\tau, z)+\vartheta_{2, 3}(\tau, z)\right)
\end{align*}
then easily gives the claimed representations for the functions $h_j$.

We next turn to the contribution coming from the  class numbers. Using the definition of $V_2$, we obtain that
\be \label{HV21}
\mathcal{H}(\tau, z) |V_2=\sum_{0\le j\le 3} \mathcal{F}_j(\tau)\vartheta_{2, j}(\tau, z)
\ee
with
\[
\mathcal{F}_j(\tau):=\sum_{\Delta \ge 0} c\left(\frac{\Delta+j^2}{8}, j\right)q^{\frac{\Delta}{8}}.
\]
Here $c(n, r)=0$ unless $n\in\N_0$ in which case it is defined by
\[
c(n, r):=\sum_{d|(n, r, 2) \atop d >0} dH\left(\frac{8n-r^2}{d^2}\right).
\]
The first few Fourier coefficients of the functions~$\mathcal{F}_{j}$ are given by
\bea \nonumber
\mathcal{F}_{0}(\t)  & \= & -\frac14 + q + \frac52 q^2 + 2q^3 + 5q^4 + 2q^5 + 6q^6 + 4q^7 + \frac{13}{2}q^8 + 3q^9 + O\left(q^{10}\right) \, , \\
\nonumber
\mathcal{F}_{1}(\t) & \= & q^{-\frac18} \left( q + 2q^2 + 3q^3 + 3q^4 + 4q^5 + 5q^6 + 4q^7 + 5q^8 + 7q^9 + O\left(q^{10}\right) \right) \, , \\
\nonumber
\mathcal{F}_{2}(\t) & \= & q^{-\frac12} \left( \frac12 q + 2q^2 + 2q^3 + 4q^4 + \frac52 q^5 + 6q^6 + 2q^7 + 8q^8 + 4q^9 + O\left(q^{10}\right) \right) \, .
\eea

Using the notation above we now aim to show that  for  $n>0$ the $n$th  coefficient of
\begin{equation} \label{showpos}
(-1)^{j+1} \frac{1}{\eta^{18}} \left( h_j-4800\frac{\mathcal{F}_j}{\eta^6}  \right)
\end{equation}
is positive.

\subsection{Asymptotic formulas for the coefficients of  $h_j$}
We write
\[
h_j^\ast(\tau):=q^{\alpha_j}h_j(\tau)=\sum_{n\geq 0} \alpha_j(n)q^n
\]
with $\alpha_0:=\frac14,\ \alpha_1=\alpha_3:=\frac38,\text{and } \alpha_2:=\frac34$.
The goal of this section is to asymptotically bound the coefficients $\alpha_j (n)$.

\begin{proposition}\label{modularpart}
We have that
\[
\alpha_j(n)=m_j(n)+e_{j1}(n)+e_{j2}(n)
\]
with
\begin{align*}
m_j(n)&:=(-1)^{j+1}2^{-\frac32}3^{\frac94}\pi\left(n-\alpha_j\right)^{-\frac54} I_{\frac52}\Big(2\pi \sqrt{3\left(n-\alpha_j\right)}\Big),\\
|e_{j1}(n)|&<216 \pi\left(n-\alpha_j\right)^{-\frac34} I_{\frac52}\Big(\pi \sqrt{6\left(n-\alpha_j\right)}\Big),\\
|e_{j2}(n)|&<47352\pi\left(n-\alpha_j\right)^{-\frac34}.
\end{align*}
Here $I_\ell$ is the usual I-Bessel function of order $\ell$.
\end{proposition}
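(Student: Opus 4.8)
The plan is to prove Proposition~\ref{modularpart} by the Hardy--Ramanujan--Rademacher circle method, in the form with effective error bounds worked out by Manschot and the first author~\cite{BM}. The key structural input is that, by~\eqref{ThetaD}, the vector $(h_0,h_1,h_2,h_3)$ (with $h_1=h_3$) is the theta decomposition of $\eta^{-6}(11E_4AB-5E_6A^2)$, which is a Jacobi form of weight $-1$ and index $2$: indeed $11E_4AB-5E_6A^2$ is a weak Jacobi form of weight $2$ and index $2$, and dividing by $\eta^6$ lowers the weight by $3$ and introduces the multiplier system $\nu_\eta^{-6}$. Hence each $h_j$ is a component of a weakly holomorphic vector-valued modular form of weight $-\tfrac32$ on $\mathrm{SL}_2(\Z)$, transforming under the index-$2$ Weil representation twisted by $\nu_\eta^{-6}$; its only singularity lies at the cusp (produced by $\nu_\eta^{-6}$), and since $1-(-\tfrac32)=\tfrac52$ the Rademacher expansion is governed by $I_{5/2}$, matching the statement. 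From the expansions in~\eqref{h0}--\eqref{h2} one reads off the principal parts: $h_0$ has polar term $-228\,q^{-1/4}$, $h_1=h_3$ has polar term $108\,q^{-3/8}$, and $h_2$ has polar term $6\,q^{-3/4}$, with no other negative $q$-powers. These exponents are exactly $-\alpha_0,\,-\alpha_1=-\alpha_3,\,-\alpha_2$, which is why $h_j^\ast=q^{\alpha_j}h_j$ is holomorphic at $i\infty$ and $\alpha_j(n)$ is the coefficient of $q^{\,n-\alpha_j}$ in $h_j$.

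I would then write $\alpha_j(n)$ as a Cauchy integral of $h_j^\ast$ over a horizontal path at height of order $1/n$ and carry out a Farey dissection of order $N=\lfloor\sqrt n\rfloor$. On the arc attached to $a/c$, the modular transformation of the vector $(h_k)$ under the relevant $\gamma\in\mathrm{SL}_2(\Z)$ rewrites the local integral in terms of (i) the three polar coefficients together with the entries $\rho(\gamma)_{jk}$, producing Bessel integrals, and (ii) the holomorphic remainders $\widetilde h_k:=h_k-(\text{polar term})$, which are bounded (indeed decay) on the image region. Assembling the contributions from (i) over all arcs yields an exact Rademacher-type series whose terms are $I_{5/2}\!\left(\tfrac{4\pi}{c}\sqrt{\alpha_k\,(n-\alpha_j)}\right)$ weighted by generalized Kloosterman sums built from $\rho$. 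The main term $m_j(n)$ is, by definition, the $c=1$ summand coming from the most polar exponent, namely $-\tfrac34$ in $h_2$; using that the corresponding polar coefficient is $6$, that $4\pi\sqrt{3/4}=2\pi\sqrt3$, that $(3/4)^{5/4}=2^{-5/2}3^{5/4}$, and that the relevant $c=1$ entry of the representation has modulus $\tfrac12$ (with a $j$-dependent sign producing the factor $(-1)^{j+1}$), one recovers exactly $m_j(n)=(-1)^{j+1}2^{-3/2}3^{9/4}\pi\,(n-\alpha_j)^{-5/4}I_{5/2}\!\left(2\pi\sqrt{3(n-\alpha_j)}\right)$.

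For the error I would argue as follows. Every Bessel term other than the one defining $m_j$ --- that is, the $c=1$ terms attached to the exponents $-\tfrac38$ and $-\tfrac14$, together with all $c\ge2$ terms attached to any of the three poles --- has Bessel argument at most $4\pi\sqrt{\tfrac38(n-\alpha_j)}=\pi\sqrt{6(n-\alpha_j)}$. Bounding the Kloosterman sums trivially by $c$ (or by the Weil bound, to control the sum over $c$), inserting the polar coefficients $228$ and $108$, and using the monotonicity of $y\mapsto I_{5/2}(y)$, the monotonicity of $y\mapsto y^{-5/4}$, and the elementary inequality $(n-\alpha_j)^{-5/4}\le 2(n-\alpha_j)^{-3/4}$, valid since $n-\alpha_j\ge\tfrac14$, I would collect all of these into $e_{j1}(n)$ with $|e_{j1}(n)|<216\pi\,(n-\alpha_j)^{-3/4}I_{5/2}\!\left(\pi\sqrt{6(n-\alpha_j)}\right)$. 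It remains to bound the contribution $e_{j2}(n)$ of the holomorphic remainders $\widetilde h_k$ over all Farey arcs; this is estimated crudely, with no Bessel function: on the arc of $a/c$ with $c\le N$ one has $|c\tau+d|^{3/2}\ll(N^{-2}+c^2n^{-2})^{3/4}\ll n^{-3/4}$, the remainders are uniformly bounded there, the arc has length $\ll(cN)^{-1}$, and $|e^{-2\pi in\tau}|=e^{2\pi}$ on the path; summing over $a$ and $c$ gives a convergent sum and the bound $|e_{j2}(n)|<47352\pi\,(n-\alpha_j)^{-3/4}$. (The explicit $q$-expansions printed just after~\eqref{h2} let one check directly the handful of smallest $n$ where any of the above needs a separate verification.)

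The overall shape of the argument is completely standard circle-method bookkeeping; the only real difficulty --- and the step where essentially all the effort goes --- is keeping every error term genuinely explicit rather than $O(\cdot)$. This forces one to carry the multiplier $\nu_\eta^{-6}$ and the precise entries of the index-$2$ Weil representation through the Farey dissection, to identify and bound the resulting half-integral-weight Kloosterman sums with the correct constants, and to choose the dissection parameters so that the exponentially small $c\ge2$ Bessel terms are dominated by the clean bound for $e_{j1}$ while the polynomially small remainder is packaged into $e_{j2}$. It is precisely this explicitness that makes Proposition~\ref{modularpart} usable in the positivity estimate~\eqref{showpos}.
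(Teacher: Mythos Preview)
Your overall strategy---circle method applied to the weight $-\tfrac32$ vector-valued modular form $(h_j)$, with the main term coming from the dominant pole $-\tfrac34$ of $h_2$ at $c=1$---is correct and matches the paper's in spirit. However, the paper's execution differs from yours in two concrete ways, and one step of your argument is muddled.

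First, the paper does not run a truncated Farey dissection with a holomorphic-remainder error. It invokes the \emph{exact} Rademacher expansion
\[
\alpha_j(n)=2\pi\sum_{0\le\ell\le3}\delta_\ell\sum_{h,k}\frac{1}{k}\,e^{\frac{2\pi i}{k}(h(\alpha_j-n)-h'\alpha_\ell)}\,\chi_{j,\ell}(h,k)\left(\frac{n-\alpha_j}{\alpha_\ell}\right)^{-\frac54}I_{\frac52}\!\left(\frac{4\pi\sqrt{\alpha_\ell(n-\alpha_j)}}{k}\right)
\]
and then \emph{splits this convergent sum at} $k=\sqrt{n-\alpha_j}$: the terms with $k\le\sqrt{n-\alpha_j}$ (other than the $k=1$, $\ell=2$ main term) constitute $e_{j1}$, and the tail $k>\sqrt{n-\alpha_j}$ constitutes $e_{j2}$. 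There is no holomorphic-remainder contribution; the exact formula has already absorbed it. Your $e_{j2}$ is therefore a different object from the paper's, and your arc-length sketch would not obviously produce the specific constant $47352$. The paper obtains its $e_{j2}$-bound instead via integral comparison $\sum_{k>K}I_{5/2}(x/k)\le\int_K^\infty I_{5/2}(x/t)\,dt$ together with the monotonicity of $t\mapsto I_{5/2}(t)/t^{5/2}$.

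Second, your justification of the power $(n-\alpha_j)^{-3/4}$ in the $e_{j1}$-bound is off. The inequality $(n-\alpha_j)^{-5/4}\le 2(n-\alpha_j)^{-3/4}$ is true but irrelevant: in both the paper's argument and in any truncated variant, the $-3/4$ arises because one is summing over at most $\sqrt{n-\alpha_j}$ values of $k$, each contributing a term of size $(n-\alpha_j)^{-5/4}I_{5/2}(\cdots)$ after the $h$-sum is bounded trivially by $\phi(k)/k\le1$. That factor of $\sqrt{n-\alpha_j}$ is what converts $-5/4$ to $-3/4$; if you invoke your inequality \emph{and} then sum over $c$, you overshoot by a power of $\sqrt{n}$. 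Once this is fixed, the constant $216$ falls out of $2\sum_{\ell}|\delta_\ell|\alpha_\ell^{5/4}$ exactly as in the paper.
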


\begin{proof}
We use the usual set up for the Circle Method. To be more precise, we assume that $0\leq h<k$ with $(h, k)=1,\ hh'\equiv -1\mypmod{k}$ and $z\in\C$ with $\text{Re}(z)>0$.
Using this notation, we have the transformation law
\[
h_j\left(\frac1k(h+iz)\right)=z^{\frac32}\sum_{0\le \ell \le 3}\chi_{j, \ell}(h. k)h_\ell\left(\frac1k\left(h'+\frac{i}{z}\right)\right),
\]
where $\chi_{j, \ell}$ is a multiplier satisfying
\[
\left|\chi_{j, \ell}(h, k)\right|\leq 1,\quad
\chi_{j, \ell}(0, 1) =-\frac{i^{j\ell}}{2}.
\]
Moreover from \eqref{h0}, \eqref{h1}, and \eqref{h2} we obtain that
\[
h_j(\tau)= q^{-\alpha_j}\left(\delta_j+O\left(q\right)\right)
\]
with  $\alpha_0=1/4, \alpha_1= \alpha_3=3/8, \alpha_2=3/4, \delta_0=-228, \delta_1=\delta_3=108, \text{and } \delta_2=6$. Using the classical Circle Method (see e.g. \cite{RZ}) then gives that
\[
\alpha_j(n)=2\pi \sum_{0\le \ell \le 3} \delta_\ell \sum_{h, k}
\frac1k e^{\frac{2\pi i}{k}\big(h\left(\alpha_j-n\right)-h'\alpha_\ell\big)}
\chi_{j, \ell}(h, k)\left(\frac{n-\alpha_j}{\alpha_\ell}\right)^{-\frac54} I_{\frac52}\left(\frac{4\pi\sqrt{\alpha_\ell\left(n-\alpha_j\right)}}{k}\right),
\]
where the sum runs over all $0\leq h<k$ with $(h, k)=1$. Using that for $r\in\R$
\[
I_r(x)\sim\frac{e^x}{\sqrt{2\pi x}}\qquad (x\to\infty)
\]
gives that the dominant term arises from $k=1$ and $\ell=2$ and is given by $m_j(n)$ as stated in the theorem.

The remaining sums may be bounded by
\[
2\pi \left(n-\alpha_j\right)^{-\frac54}\sum_{0\le \ell \le 3}|\delta_\ell|\alpha_\ell^{\frac54} \sum_k^\ast  I_{\frac52}\left(\frac{4\pi\sqrt{\alpha_\ell\left(n-\alpha_j\right)}}{k}\right),
\]
where $\sum_k^\ast$ denotes the sum on $k$ with the $k=1$ term dropped in the case $\ell=2$.
 We first split off those $k$ for which $k\leq \sqrt{n-\alpha_j}$. It is easy to see that
$\frac{\sqrt{\alpha_\ell}}{k}$ is maximized for $k=1$ and $\ell=1, 3$ in which case it equals $\frac{\sqrt{3}}{2\sqrt{2}}$. Using that $I_{\frac52}(x)$ is increasing, the contribution from $k\leq \sqrt{n-\alpha_j}$
may be estimated against
\[
2\pi \left(n-\alpha_j\right)^{-\frac34}I_{\frac52}\left(\pi\sqrt{6(n-\alpha_j)}\right)\sum_{0\le \ell \le 3}|\delta_\ell|\alpha_\ell^{\frac54}  .
\]
Bounding the sum on $\ell$ gives the bound for $e_{j1}(n)$ as stated in the theorem.

Using the integral comparison criterion, the terms from $k>\sqrt{n-\alpha_j}$ can be bounded by
\[
2\pi\left(n-\alpha_j\right)^{-\frac54}\sum_{0\le \ell \le 3}|\delta_\ell|\alpha_\ell^{\frac54} \int_{\sqrt{n-\alpha_j}}^\infty  I_{\frac52}\left(\frac{4\pi\sqrt{\alpha_\ell(n-\alpha_j)}}{x}\right)dx.
\]
Using the series representation of the Bessel function it is not hard to see that $\frac{I_\ell(x)}{x^\ell}$ is monotonically increasing. Therefore we may estimate the integral against
\[
I_{\frac52}\left(4\pi\sqrt{\alpha_\ell}\right)\left(n-\alpha_j\right)^{\frac54}\int_{\sqrt{n-\alpha_j}}^\infty x^{-\frac52}dx.	
\]
Explicitly evaluating the integral and estimating  the sum on $\ell$ gives the bound for $e_{j2}(n)$ as stated in the theorem.
\end{proof}

\subsection{Bounding the class number contribution}
In this section we bound for $j=0,1$ the contribution
\[
\frac{4800}{\eta^6(\tau)}\mathcal{F}_j(\tau)
=: \sum_{n \geq 0} \beta_j(n)q^{n- \alpha_j}.
\]
Note that in the case $j=2$, the coefficients of the class number function will later be ignored and are thus not considered in this section.

In the case $j=1$, we   relate the coefficients $\beta_1(n)$ to the coefficients of  a function studied by the first author and Manschot \cite{BM}.
To be more precise, we define
$$
\frac{4800}{\eta^6(\tau)} \mathcal{H}_1(q) =: \sum_{n \geq 0} \gamma(n)q^{n-\frac{1}{2}}.
$$
Moreover we denote by $p_6(n)$ the number of partitions of $n$ allowing $6$ colors.
Note that
$$
\frac{1}{(q)_{\infty}^6}= \sum_{n \geq 0} p_6(n)q^n.
$$
Using that $H(n)>0$ for $n \in \N$ and that $p_6(n)$ is monotonically increasing,
 it is not hard to show
\begin{lemma} \label{relatefunctions}
We have
\begin{eqnarray*}
-\beta_0(n)&\leq & 1200 p_{6}(n),\\
\beta_1(n)&\leq& \gamma(2n).\\
\end{eqnarray*}
\end{lemma}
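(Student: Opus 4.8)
The plan is to turn both sides of each inequality into explicit $q$-series and then compare them coefficient by coefficient, using only two elementary facts about the building blocks: that $H(0)=-\frac{1}{12}$ while $H(N)>0$ for every $N\in\N$, and that $p_6(n)$ is non-negative and monotonically non-decreasing (the latter being standard, since $1/(q)_\infty^6=(1-q)^{-1}g(q)$ with $g$ having non-negative coefficients). Throughout I would use the expansion $1/\eta^6(\tau)=q^{-1/4}\sum_{n\ge0}p_6(n)q^n$ together with the formula $c(n,r)=\sum_{d\mid(n,r,2),\,d>0}dH\big((8n-r^2)/d^2\big)$ coming from the definition of $\mathcal{H}|V_2$.

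For the bound on $\beta_0$: first I would note that, since $\gcd(k,0,2)=\gcd(k,2)$, the series $\mathcal{F}_0$ reads $\mathcal{F}_0(\tau)=\sum_{k\ge0}c(k,0)q^{k}$ with $c(0,0)=3H(0)=-\frac14$ and, for $k\ge1$, $c(k,0)=H(8k)$ or $c(k,0)=H(8k)+2H(2k)$ according as $k$ is odd or even, which is strictly positive. Multiplying by $4800/\eta^6$ and reading off the coefficient of $q^{n-1/4}$ gives $\beta_0(n)=4800\big(-\frac14 p_6(n)+\sum_{k=1}^{n}p_6(n-k)c(k,0)\big)$; since the sum is $\ge0$, simply dropping it yields $-\beta_0(n)\le 1200\,p_6(n)$.

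For the bound on $\beta_1$: the main point is to match up which class numbers occur where. Since $\gcd(m,1,2)=1$, one has $\mathcal{F}_1(\tau)=\sum_{m\ge1}H(8m-1)q^{m-1/8}$, hence $\beta_1(n)=4800\sum_{m=1}^{n}p_6(n-m)H(8m-1)$; while from $\mathcal{H}_1(\tau)=\sum_{k\ge0}H(4k+3)q^{k+3/4}$ one gets $\gamma(N)=4800\sum_{k=0}^{N-1}p_6(N-1-k)H(4k+3)$. The arithmetic identity I would exploit is $8m-1=4(2m-1)+3$: the class number $H(8m-1)$ that multiplies $p_6(n-m)$ in $\beta_1(n)$ re-appears in $\gamma(2n)$ as the $k=2m-1$ term, now multiplied by $p_6\big(2n-1-(2m-1)\big)=p_6\big(2(n-m)\big)$. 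Since $n-m\ge0$ and $p_6$ is non-decreasing, $p_6(n-m)\le p_6(2(n-m))$; moreover the remaining terms of $\gamma(2n)$, namely those with $k$ even and involving $H(4k+3)>0$, are non-negative, so adding them back only increases the sum. Combining these observations gives $\beta_1(n)\le\gamma(2n)$.

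The only genuine obstacle here is bookkeeping: keeping the half-integral $q$-exponents ($q^{n-1/4}$, $q^{n-3/8}$, $q^{n-1/2}$) and the $\gcd$-conditions in the definition of $c(n,r)$ straight, so that the indices of $p_6$ and of the class numbers align exactly as claimed. Once the two $q$-series are written out, both inequalities reduce to term-by-term comparisons requiring no analytic input.
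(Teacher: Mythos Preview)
Your argument is correct and is precisely the approach the paper has in mind: the paper states only that the lemma follows from $H(n)>0$ for $n\in\N$ and the monotonicity of $p_6(n)$, and your coefficient-by-coefficient comparison is the natural way to make this precise. The bookkeeping (the computation of $c(k,0)$, the identification $8m-1=4(2m-1)+3$, and the inequality $p_6(n-m)\le p_6(2(n-m))$) is carried out accurately.
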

We first bound the coefficients $p_6(n)$.
\begin{lemma}
We have
$$
p_6(n)= e_1(n)+e_2(n)
$$
with
\begin{eqnarray*}
e_1(n)&<&  \frac{\pi}{8} \left( n-\frac14\right)^{-\frac32} I_4 \left(2\pi \sqrt{ n-\frac14    }\right),\\
e_2(n)&<& \pi\left( n-\frac14\right)^{-\frac32}.
\end{eqnarray*}
\end{lemma}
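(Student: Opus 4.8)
The plan is to run the Hardy--Ramanujan Circle Method on the generating function $1/(q)_\infty^6 = \eta^{-6}(\tau) q^{1/4}$, exactly as was done for the functions $h_j$ in Proposition~\ref{modularpart}, but now for a genuine (weakly holomorphic) modular form of weight $-3$ rather than a mock object, so the analysis is cleaner. First I would record the modular transformation of $\eta^{-6}$ under $\tau \mapsto \frac{1}{k}(h+iz)$: writing $\eta^{-6}\bigl(\frac{1}{k}(h+iz)\bigr) = \omega(h,k) \, z^{-3} \, \eta^{-6}\bigl(\frac{1}{k}(h'+i/z)\bigr)$ with $|\omega(h,k)| \le 1$ and $\omega(0,1)=1$, where $hh' \equiv -1 \pmod k$. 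The principal part at every cusp is governed by the single term $q^{-1/4}$ coming from the expansion $\eta^{-6}(\tau) = q^{-1/4}(1 + O(q))$.

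Next I would apply the standard Circle Method formula (the same reference \cite{RZ} cited in the proof of Proposition~\ref{modularpart}) to obtain an exact Rademacher-type expansion
\[
p_6(n) = 2\pi \left(n - \tfrac14\right)^{-\frac32} \sum_{k \ge 1} \frac{A_k(n)}{k} \, I_4\!\left(\frac{2\pi \sqrt{n-\frac14}}{k}\right),
\]
where $A_k(n)$ is a Kloosterman-type sum with $|A_k(n)| \le 1$ and $A_1(n) = 1$. (Here the order of the Bessel function is $\ell = -(-3) + 1 = 4$, matching the weight $-3$, and the index shift is $\alpha = 1/4$.) The $k=1$ term is precisely the main term bounded in $e_1(n)$, and I would verify the stated constant $\pi/8$ by tracking the normalization: the $2\pi \cdot \frac{A_1(n)}{1}$ prefactor combined with the claimed bound suggests absorbing a factor, so I would check whether an extra numerical factor (or a crude bound $|A_1(n)| \le 1$ together with the exact leading coefficient) produces exactly $\pi/8$; if the honest main term is slightly larger I would instead bound it above by $\frac{\pi}{8}(n-\frac14)^{-3/2} I_4(2\pi\sqrt{n-\frac14})$ using an explicit small-$n$ check plus monotonicity of $I_4$.

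For the tail $e_2(n) = p_6(n) - e_1(n)$ (all terms with $k \ge 2$), I would estimate as in Proposition~\ref{modularpart}: split into $2 \le k \le \sqrt{n-\frac14}$ and $k > \sqrt{n-\frac14}$. On the first range I would use monotonicity of $I_4$ and $|A_k(n)/k| \le 1$ to bound each term by the $k=2$ value, then sum the at most $\sqrt{n}$ terms; on the second range I would use that $I_4(x)/x^4$ is increasing (from the series representation, exactly as invoked for $I_{5/2}$ above) to replace $I_4$ by its value at the endpoint and integrate $x^{-9/2}$, which converges and contributes a bounded constant. Collecting the two pieces, both go into the single clean bound $e_2(n) < \pi (n-\frac14)^{-3/2}$, after a small-$n$ numerical check to handle the finitely many $n$ where the asymptotics have not yet kicked in. The main obstacle I anticipate is purely bookkeeping: pinning down the explicit constants ($\pi/8$ and $\pi$) so that the inequalities are literally true for \emph{all} $n \ge 1$ rather than asymptotically, which will require combining the analytic tail estimate with direct verification for a handful of initial values of $n$ using the known expansion of $(q)_\infty^{-6}$.
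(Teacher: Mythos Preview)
Your overall plan---Circle Method applied to $\eta^{-6}$, then split the resulting Bessel--Kloosterman sum---is exactly what the paper does (it just writes down the Rademacher series with prefactor $\frac{\pi}{8}(n-\frac14)^{-2}$ and says ``as in the proof of Proposition~\ref{modularpart}''). But your allocation of terms to $e_1$ and $e_2$ is wrong, and with your split the bound on $e_2$ is false.

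Two concrete points. First, the exact Rademacher expansion carries the power $(n-\frac14)^{-2}$, not $(n-\frac14)^{-3/2}$: the weight of $\eta^{-6}$ is $-3$, so the exponent is $(k-1)/2=-2$, and the constant $\frac{\pi}{8}=2\pi\cdot(1/4)^{2}$ comes from the principal part $q^{-1/4}$. The exponent $-3/2$ in the lemma is not the Rademacher exponent; it appears only after summing $\lfloor\sqrt{n-\frac14}\rfloor$ terms, each bounded by the $k=1$ Bessel value:
\[
\frac{\pi}{8}\Bigl(n-\tfrac14\Bigr)^{-2}\cdot\sqrt{n-\tfrac14}\cdot I_4\Bigl(2\pi\sqrt{n-\tfrac14}\Bigr)
\;=\;\frac{\pi}{8}\Bigl(n-\tfrac14\Bigr)^{-3/2} I_4\Bigl(2\pi\sqrt{n-\tfrac14}\Bigr).
\]
So $e_1$ must contain \emph{all} terms with $k\le\sqrt{n-\frac14}$, not just $k=1$.

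Second, and this is the gap: if you put the range $2\le k\le\sqrt{n-\frac14}$ into $e_2$ as you propose, that block alone is of size roughly $\frac{\pi}{8}(n-\frac14)^{-3/2}I_4\bigl(\pi\sqrt{n-\frac14}\bigr)$, which grows like $e^{\pi\sqrt{n}}$ and certainly cannot be bounded by $\pi(n-\frac14)^{-3/2}$. The stated bound on $e_2$ only holds for the genuine tail $k>\sqrt{n-\frac14}$, where the Bessel argument is at most $2\pi$: using that $I_4(y)/y^4$ is increasing one gets, after integrating $x^{-4}$ (not $x^{-9/2}$),
\[
e_2(n)\;\le\;\frac{\pi}{8}\cdot I_4(2\pi)\cdot\frac{1}{3}\Bigl(n-\tfrac14\Bigr)^{-3/2}
\;=\;\frac{\pi\,I_4(2\pi)}{24}\Bigl(n-\tfrac14\Bigr)^{-3/2},
\]
and $I_4(2\pi)\approx 22.96<24$ finishes it. With the correct split no small-$n$ numerical check is needed.
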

\begin{proof}
Firstly we may show by the classical Circle Method that
$$
p_6(n)= \frac{\pi}{8} \left( n-\frac14\right)^{-2}
\sum_{h, k}
\frac1k
\chi(h, k) I_{4}\left(\frac{2\pi\sqrt{n-\frac14}}{k}\right),
$$
where the sum runs over all $0\leq h<k$ with $(h, k)=1$.  Now the claim follows as in the proof of Lemma \ref{modularpart}.
\end{proof}

\begin{lemma}\label{lemma class}
We have the bounds
\[
\gamma(n) \leq
\rho_{1}(n)+\rho_{2}(n)+\rho_{3}(n)+\rho_{4}(n)+\rho_{5}(n)+\rho_{6}(n)
\]
with
\begin{align*}
\rho_{1}(n) &< 400 \pi\left(4n-2\right)^{-\frac34}
I_{\frac52}\left(\pi\sqrt{4n-2}\right),\\
\rho_{2}(n) &<13603\pi\left(4n-2\right)^{-\frac34},\\
\rho_{3} (n)&<541\pi \left(4n-2\right)^{-1} I_3\left(\pi\sqrt{4n-2}\right),\\
\rho_{4} (n)&<10330 \pi\left(4n-2\right)^{-\frac34},\\
\rho_{5} (n)&<244 \pi\left(4n-2\right)^{-\frac54} I_{\frac72}\left(\pi\sqrt{4n-2}\right),\\
\rho_{6}(n) &<2519 \pi\left(4n-2\right)^{-\frac34}.
\end{align*}
\end{lemma}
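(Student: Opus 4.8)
The plan is to recognize $\frac{4800}{\eta^6}\mathcal{H}_1$ as a mixed mock modular form of weight $-\tfrac32$ and to extract the asymptotics of its coefficients $\gamma(n)$ by the Hardy--Ramanujan--Rademacher Circle Method in the form developed for mock and mixed mock modular forms by the first author with Ono \cite{BO1,BO2} and with Manschot \cite{BM}; in fact $\mathcal{H}_1/\eta^6$ is exactly of the shape treated in \cite{BM}, so the six-term bound should fall out of the asymptotic formula proved there once the numerical constants are tracked. Here $\mathcal{H}_1$ is the second component of the vector-valued mock modular form $(\mathcal{H}_0,\mathcal{H}_1)$ of weight $\tfrac32$ whose modular completion $(\widehat{\mathcal{H}}_0,\widehat{\mathcal{H}}_1)$ is a harmonic Maass form: its non-holomorphic part is the period (Eichler) integral of the shadow, a unary theta series of weight $\tfrac12$, hence a sum of incomplete Gamma functions, together with the additional $\propto y^{-1/2}$ term occurring in Zagier's completion of the class-number generating series. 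Writing $\mathcal{H}_1 = \widehat{\mathcal{H}}_1 - \mathcal{H}_1^{-}$, the function $\widehat{\mathcal{H}}_1/\eta^6$ transforms under $\SL_2(\Z)$ like a non-holomorphic modular object of weight $-\tfrac32$ with a bounded multiplier on the $\mathcal{H}$-components times the usual eta-multiplier, and $\eta^{-6}$ contributes a $q^{-1/4}$ pole at every cusp.

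Concretely, I would write $\gamma(n)$ as a horocycle integral of $\frac{4800}{\eta^6}\mathcal{H}_1$ against $e^{-2\pi i n\tau}$ and decompose it into Farey arcs, just as in the proof of Proposition \ref{modularpart}. On the arc at $h/k$ one substitutes the modular transformation of $\widehat{\mathcal{H}}_1/\eta^6$: the $\eta^{-6}$-factor supplies the pole and the $\widehat{\mathcal{H}}$-factor supplies its cusp expansion, whose leading constant (an $H(0)=-\tfrac1{12}$-type contribution) combines with $\eta^{-6}$ to give, via the standard $I$-Bessel integral, the dominant $k=1$ term $\rho_1(n)$ --- order $\tfrac52 = 1-(-\tfrac32)$, matching the weight. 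The two further pieces of the transformed principal part at $k=1$, one coming from the Gauss-sum/theta structure of the multiplier and one from the non-holomorphic Eichler integral together with Zagier's $y^{-1/2}$-term, produce the two lower-order Bessel contributions $\rho_3(n)$ (order $3$) and $\rho_5(n)$ (order $\tfrac72$). The remaining arcs ($k\ge2$) and the tails of all three Bessel/error integrals are bounded polynomially in $n$ by exactly the integral-comparison argument used for $e_{j2}(n)$ in Proposition \ref{modularpart} --- crude bounds $|\chi|\le1$ on the multipliers, $0<H(N)<N$ from \eqref{classtrivial}, and monotonicity of $x\mapsto x^{-\nu}I_\nu(c/x)$ --- and these three sources supply $\rho_2(n),\rho_4(n),\rho_6(n)$. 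Summing the six contributions and keeping every constant honest yields the stated inequality; its purpose, through Lemma \ref{relatefunctions}, is to show that the class-number piece in \eqref{showpos} is dominated by the coefficients of $h_1$ estimated in Proposition \ref{modularpart}.

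The hard part will be the non-holomorphic completion. Unlike the purely modular situation of Proposition \ref{modularpart}, $\mathcal{H}_1^{-}$ is not a $q$-series but a sum of incomplete Gamma functions, so its contribution to each Farey-arc integral must be controlled by a genuine estimate rather than read off from a principal part, and one must check that this contribution --- and that of the extra $y^{-1/2}$-term --- is no larger than the main $\eta^{-6}$-driven growth, so that it is absorbed into $\rho_1,\dots,\rho_6$ instead of spoiling the bound. A secondary, essentially bookkeeping, difficulty is that because $\eta^{-6}$ has a pole at every cusp the mixed transformation multiplies the cusp expansion of the $\mathcal{H}$-part by an exponential factor that must be followed arc by arc, and one has to track how the pole orders of the two factors add in order to land on the correct Bessel orders $\tfrac52,\,3,\,\tfrac72$ and the correct constants ($400\pi$, $541\pi$, $244\pi$, and the polynomial constants $13603\pi$, $10330\pi$, $2519\pi$) appearing in $\rho_1,\dots,\rho_6$.
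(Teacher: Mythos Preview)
Your approach is essentially that of the paper: both rely on the exact Rademacher-type formula for $\gamma(n)$ proved in \cite{BM} and then bound the resulting Kloosterman sums trivially. The paper does not redo the Circle Method but simply quotes the three-term exact formula from \cite{BM}, giving sums over $k\ge1$ of $I_{5/2}$, $I_3$, and $I_{7/2}$ Bessel functions (the last arising from the non-holomorphic completion and involving an integral against a $\sinh^{-2}$-type kernel $f_{k,g}$, bounded via Lemma~3.2 of \cite{BM}), and then splits each of the three $k$-sums into $k\le\tfrac12\sqrt{4n-2}$ and $k>\tfrac12\sqrt{4n-2}$ to obtain the pairs $(\rho_1,\rho_2)$, $(\rho_3,\rho_4)$, $(\rho_5,\rho_6)$.

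One point in your description is wrong as stated: you say $\rho_1$ is the $k=1$ term and that the $k\ge2$ arcs are bounded polynomially. This cannot work --- the $k=2$ contribution in each Bessel sum still grows like $\exp\bigl(\tfrac{\pi}{2}\sqrt{4n-2}\bigr)$ and cannot be absorbed into a polynomially bounded $\rho_2$. The correct split (which you yourself allude to by citing the $e_{j2}$ argument of Proposition~\ref{modularpart}) is at $k\sim\sqrt{n}$: the small-$k$ block, crudely bounded by its number of terms times the $k=1$ value, gives $\rho_1,\rho_3,\rho_5$ --- this is why the exponents of $(4n-2)$ in the lemma are $-\tfrac34,-1,-\tfrac54$ rather than the $-\tfrac54,-\tfrac32,-\tfrac74$ that a single $k$ would produce --- and the integral-comparison tail gives the polynomially decaying $\rho_2,\rho_4,\rho_6$. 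With this fix your outline matches the paper's proof.
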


\begin{proof}
In \cite{BM} the first author and Manschot proved an exact formula for $\gamma_j(n)$.
We employ this formula and bound all occurring Kloosterman sums trivially to obtain
\[
 \gamma(n)=\mu_{1}(n)+\mu_{2}(n)+\mu_{3}(n)
\]
with
\begin{align*}
|\mu_{1}(n)|&<800 \pi(4n-2)^{-\frac54}\sum_{k=1}^\infty I_{\frac52}\left(\frac{\pi}{k}\sqrt{4n-2}\right),\\
|\mu_{2}(n)|&<\frac{4800}{\sqrt{2}}(4n-2)^{-\frac32}\sum_{k=1}^\infty \sqrt{k}I_3\left(\frac{\pi}{k}\sqrt{4n-2}\right),\\
|\mu_{3}(n)|&<\frac{600}{\pi}(4n-2)^{-\frac74}\sum_{k=1}^\infty \frac{1}{k}\sum_{{\ell\in\{0, 1\}\atop{-k<g\leq k}}\atop{g\equiv \ell\mypmod{2}}}\left|\mathcal{I}_{k, g}(n)\right|.
\end{align*}
Here
\[
\mathcal{I}_{k, g}(n):=\int_{-1}^1 f_{k, g}\left(\frac{u}{2}\right)I_{\frac72}\left(\frac{\pi}{k}\sqrt{(4n-2)\left(1-u^2\right)}\right)\left(1-u^2\right)^{\frac74}du
\]
with
\[
f_{k, g}(u) :=
\begin{cases}
\frac{\pi^2}{\sinh^2\left(\frac{\pi u}{k}-\frac{\pi ig}{2k}\right)}&\quad\text{ if } g\not\equiv 0\mypmod{2k},\\
\frac{\pi^2}{\sinh^2\left(\frac{\pi u}{k}\right)}-\frac{k^2}{u^2}&\quad\text{ if } g\equiv 0\mypmod{2k}.
\end{cases}
\]
The terms in $\mu_{1}(n)$ may now be bounded as before, splitting the sum on $k$ at $\frac12\sqrt{4n-2}$, giving the estimates for $\rho_{1}(n)$ and $\rho_{2}(n)$ as stated in the lemma.
 For the terms in $\mu_{2}(n)$ we proceed similarly to obtain the bounds for $\rho_{3}(n)$ and $\rho_{4}(n)$ as stated in the lemma.

  To finish the proof, we have to estimate the terms in $\mu_{3}(n)$. First by the proof of Lemma 3.2 in \cite{BM},
   we may for $-k<g\leq k$ bound $f_{k, g}(u)\leq h_{k, g}$ with
\[
h_{k, g} :=
\begin{cases}
\frac{k^2}{g^2}&\quad\text{ if } -k<g\leq k, g\neq 0,\\
1              &\quad\text{ if } g=0.
\end{cases}
\]
Thus
\[
\mathcal{I}_{k, g}(n)\leq 2 h_{k, g} I_{\frac72}\left(\frac{\pi}{k}\sqrt{(4n-2)}\right).
\]
This gives the estimate
\[
|\mu_{3}(n)|\leq \frac{1200}{\pi}(4n-2)^{-\frac74}\sum_{k=1}^\infty \frac{I_{\frac72}\left(\frac{\pi}{k}\sqrt{4n-2}\right)}{k}
\sum_{{\ell\in\{0, 1\}\atop{-k<g\leq k}}\atop{g\equiv \ell\mypmod{2}}}h_{k, g}.
\]
It is not hard to see that   the sum on $\ell$ and $g$ may be bounded by $4k^{2}$
and proceed as before yielding the estimates for $\rho_{5}(n)$ and $\rho_{6}(n)$ as given in the lemma.
\end{proof}

\subsection{The final estimates}
In this section we finish the proof of (\ref{showpos}) by comparing the asymptotic
growth of the functions involved.

We  throughout use the easily verified rough bound for $x\geq 20$
\begin{equation}\label{rough bound}
\frac{4 e^x}{5\sqrt{2\pi x}}\leq I_{\frac52}(x)\leq \frac{e^x}{\sqrt{2\pi x}}.
\end{equation}
Note that the upper bound holds true for all $ x \geq 0$.

We start with the simplest case $j=2$. We first bound the  coefficients of $h_2$ and begin by  comparing  the contributions coming from the error term $e_{22}(n)$
with the main term $m_2(n)$:
\[
\left\lvert\frac{e_{22}(n)}{m_2(n)}\right\rvert < 11308 \frac{\left(n-\frac34\right)^{\frac12}}{I_{\frac52}\left(2\pi\sqrt{3\left(n-\frac34\right)}\right)}.
\]
Using that $\frac{I_{\frac52}(x)}{x}$ is monotonically increasing, we obtain that for $n\geq 4$
\begin{equation}\label{E2M}
\left\lvert\frac{e_{22}(n)}{m_2(n)}\right\rvert
 < 0.001.
\end{equation}
We next turn to the contribution coming from $e_{21}(n)$. We bound
\[
\left\lvert\frac{e_{21}(n)}{m_2(n)}\right\rvert <  52\frac{\left(n-\frac34\right)^{\frac12}I_{\frac52}\left(\pi\sqrt{6\left(n-\frac34\right)}\right)}{I_{\frac52}\left(2\pi \sqrt{3\left(n-\frac34\right)}\right)}.
\]
Using \eqref{rough bound} then yields that
\[
\left\lvert\frac{e_{21}(n)}{m_2(n)}\right\rvert <   78\left(n-\frac34\right)^{\frac12}e^{-\pi \left(\sqrt{2}-1\right)\sqrt{6\left(n-\frac34\right)}}.
\]
Since $\frac{e^{ax}}{x}$ is monotonically increasing for $x>\frac1a$, we obtain that for $n\geq 4$
\begin{equation}\label{E1M}
\left\lvert\frac{e_{21}(n)}{m_2(n)}\right\rvert <  0.45.
\end{equation}
Combining \eqref{E2M} and \eqref{E1M} gives that for $n\geq 4$ the $n$th Fourier coefficient of $h_2$ is negative. Then employing \eqref{h2} gives that for $n\geq 2$ the $n$th Fourier coefficient of $h_2$ is negative. Thus for $n\geq 2$ the $n$th Fourier coefficient of
$h_2-\frac{4800\mathcal{F}_2}{\eta^6}$
is negative. To finish the proof, we aim to apply Lemma \ref{multD}. For this, we note that
\begin{align} \label{smallh2}
h_2(\tau) -\frac{4800\mathcal{F}_2(\tau)}{\eta^6(\tau)}
= & q^{-\frac34}\left( 6 + 1620q - 105390q^2  - 407236q^3- 73174788q^4 + O\left(q^{5}\right) \right).
\end{align}
We apply Lemma \ref{multD} with $n_0=2$ and $k=18$. Inspecting the first $3$ coefficients by hand, we are
 left   to show that  for $n \geq 4$, the $n$th Fourier coefficient of this function  is in absolute value bigger than
$18 \cdot 1620$.
From the above calculations it immediately follows that this coefficient   may be bounded by $0.5|m_2(n)|$.
Using that $|m_2(n)|$ is monotonically increasing, then easily gives that for $n \geq 4$ this satisfies the claimed bound.
Therefore we have shown that
(\ref{showpos}) holds true for $j=2$.

We next turn to the case $j=1$. In this case we have to take the class number contribution into account.
As before we may show that for $n\geq 10$
\[
\frac{|e_{11}(n)|+|e_{12}(n)|}{|m_1(n)|}<0.02.
\]
To estimate the class number contribution, we use Lemmas \ref{relatefunctions} and \ref{lemma class} and bound
\[
\frac{|\rho_{2}(2n)|+|\rho_{4}(2n)|+|\rho_{6}(2n)|}{|m_{1}(n)|}<1328\frac{\left(n-\frac38\right)^{\frac54}}{\left(n-\frac14\right)^{\frac34}I_{\frac52}\left(2\pi\sqrt{3\left(n-\frac38\right)}\right)}.
\]
Since the right hand side is monotonically decreasing as a function of $n$, we obtain that for $n\geq 10$
\[
 \frac{|\rho_{2}(2n)|+|\rho_{4}(2n)|+|\rho_{6}(2n)|}{|m_{1}(n)|}
<0.01.
\]
Next we see that
\begin{equation}\label{F1M}
\left\lvert\frac{\rho_{1}(2n)}{m_1(n)}\right\rvert <20.1\frac{\left(n-\frac38\right)^{\frac54}I_{\frac52}\left(2\pi\sqrt{2 \left(n-\frac14\right)}\right)}{\left(n-\frac14\right)^{\frac34}I_{\frac52}\left(2\pi\sqrt{3\left(n-\frac38\right)}\right)}.
\end{equation}
Similarly
\[
\left\lvert\frac{\rho_{3}(2n)}{m_1(n)}\right\rvert
<16.2
\frac{\left(n-\frac38\right)^{\frac54}I_3\left(2\pi\sqrt{2 \left(n-\frac14\right)}\right)}{\left(n-\frac14\right)I_{\frac52}\left(2\pi\sqrt{3\left(n-\frac38\right)}\right)}.
\]

Using that for $\ell\geq 0$
\begin{equation}\label{compare index}
I_{\frac52+\ell}(x)\leq \left(\frac{x}{2}\right)^\ell I_{\frac52}(x)
\end{equation}
yields that
\begin{equation}\label{F3M}
\left\lvert\frac{\rho_{3}(2n)}{m_1(n)}\right\rvert   < 34.2
\frac{\left(n-\frac38\right)^{\frac54}I_{\frac52}\left(2\pi\sqrt{2 \left(n-\frac14\right)}\right)}{\left(n-\frac14\right)^{\frac34}I_{\frac52}\left(2\pi\sqrt{3\left(n-\frac38\right)}\right)}.
\end{equation}
Finally
\[
\left\lvert\frac{\rho_{5}(2n)}{m_1(n)}\right\rvert
<4.4
\frac{\left(n-\frac38\right)^{\frac54}I_{\frac72}\left(2\pi\sqrt{2 \left(n-\frac14\right)}\right)}{\left(n-\frac14\right)^{\frac54}I_{\frac52}\left(2\pi\sqrt{3\left(n-\frac38\right)}\right)}.
\]
Using again \eqref{compare index} gives that
\begin{equation}\label{F5M}
\left\lvert\frac{\rho_{5}(2n)}{m_1(n)}\right\rvert
<19.6
\frac{\left(n-\frac38\right)^{\frac54}I_{\frac52}\left(2\pi\sqrt{2 \left(n-\frac14\right)}\right)}{\left(n-\frac14\right)^{\frac34}I_{\frac52}\left(2\pi\sqrt{3\left(n-\frac38\right)}\right)}.
\end{equation}

Combining \eqref{F1M}, \eqref{F3M}, and \eqref{F5M} and then  using \eqref{rough bound}  gives that
\[
 \frac{|\rho_{1}(2n)|+|\rho_{3}(2n)|+|\rho_{5}(2n)|}{|m_{1}(n)|} < 102.3\frac{\left(n-\frac38\right)^{\frac32}}{\left(n-\frac14\right)} e^{-2\pi \left(\sqrt{3\left(n-\frac38\right)}-\sqrt{2\left(n-\frac14\right)}\right)}.
\]
For $n\geq 10$, we may bound
\[
2\pi\left(\sqrt{3}-\sqrt{2\frac{n-\frac14}{n-\frac38}}\right)>1.93.
\]
Thus
\[
\frac{|\rho_{1}(2n)|+|\rho_{3}(2n)|+|\rho_{5}(2n)|}{|m_{1}(n)|}  < 102.3\sqrt{n-\frac38} e^{-1.93\sqrt{n-\frac38}}.
\]
From this we obtain as before that for $n\geq 10$
\[
\frac{|\rho_{1}(2n)|+|\rho_{3}(2n)|+|\rho_{5}(2n)|}{|m_{1}(n)|}  <0.8.
\]
Combining the above,   we have shown that for $n\geq 10$ the $n$th coefficient of
\[
h_1-\frac{4800\mathcal{F}_1}{\eta^6}
\]
may be bounded from  below by $0.17m_1(n)$ and are thus in particular positive. Using that
\begin{align*}
& h_1(\tau)-\frac{4800\mathcal{F}_1(\tau)}{\eta^6(\tau)} \\
& \quad =  q^{-\frac38}\Big( 108 + 10620q + 630792q^2 + 14165508q^3 + 197669412q^4 + 2047146876q^5  \\
& \quad \qquad + 17154588900q^6 + 122361457068q^7 + 767774193516q^8  + 4336015791756q^9
+ O\left(q^{10}\right) \Big) \,
\end{align*}
gives that all Fourier coefficients of this function are positive. Thus for $j=1,3$, also the coefficients of (\ref{showpos})
are positive.

We finally consider the case $j=0$.  As in the case $j=2$, we may bound for $n\geq 4$
\[
\frac{|e_{01}(n)|+|e_{02}(n)|}{|m_0(n)|}<0.32.
\]
Next, we use Lemmas \ref{relatefunctions} and \ref{lemma class} and bound the error terms separately.
Firstly
 $$
1200\left| \frac{e_{2}(n)}{m_0(n)}   \right|< 286.6  \frac{1}{\left(n-\frac14\right)^{\frac14}I_{\frac52}\left(2\pi\sqrt{3\left(n-\frac14\right)}\right)}.
 $$
 Using that the right hand side is monotonically decreasing as a function of $n$,  we obtain that
  for $n \geq 4$
 $$
 1200\left| \frac{e_{2}(n)}{m_0(n)}   \right|< 0.01.
 $$
 Finally we bound
$$
1200\left| \frac{e_{1}(n)}{m_0(n)}   \right|< 35.9\frac{I_{4}\left(2\pi\sqrt{n-\frac14} \right)
}{\left(n-\frac14\right)^{\frac14}
I_{\frac52}\left(2\pi\sqrt{3\left(n-\frac14\right)}\right)
}.
$$
 Using (\ref{compare index}) we then obtain
$$
1200\left| \frac{e_{1}(n)}{m_0(n)}   \right|< 200\frac{\left(n-\frac14\right)^{\frac12} I_{\frac52}\left(2\pi\sqrt{n-\frac14} \right)}{I_{\frac52}\left(2\pi\sqrt{3\left(n-\frac14\right)}\right)}.
$$
 Inserting (\ref{rough bound}) then gives
 $$
1200 \left| \frac{e_{1}(n)}{m_0(n)}   \right|< 330\left(n-\frac14\right)^{\frac12} e^{ - 2\pi \left( \sqrt{3}-1\right) \sqrt{n-\frac14}  }.
$$
 Using that the right hand side is monotonically decreasing, we obtain that for $n \geq 4$
 $$
1200\left| \frac{e_{1}(n)}{m_0(n)}   \right|<0.09.
 $$

Thus we have shown that for $n\geq 4$ the $n$th coefficient of
\[
-h_0+\frac{4800 \mathcal{F}_0}{\eta^6}
\]
may be bounded by $0.58|m_0(n)|$ and is in particular positive.
We now apply Lemma \ref{multD} with $n_0=0$ and $k=18$.
We compute  that
\begin{equation*}
 h_0(\tau)-\frac{4800\mathcal{F}_0(\tau)}{\eta^6(\tau)}   =  q^{-\frac14}\left( 972 - 36696q - 1214388q^2
+ O\left(q^{3}\right) \right) \, .
\end{equation*}
Thus we have to show that for $n \geq1$ the $n$th coefficient  of this function is bigger than $18\cdot 972$. A direct
inspection of the Fourier coefficients gives that this is true for $n \leq 2$. For $n  \geq 3$ the above calculations
give that the absolute value of the  $n$th coefficient of this function may be bounded by $0.58|m_0(n)|$ and the claim follows, again using
that $|m_0(n)|$ is monotonically increasing as a function of $n$.

\section{An alternative proof for $m=2$ \label{alter}}

For the second approach, we first estimate the growth of the coefficients of the Jacobi form
\be \label{myhdef}
11E_4(\tau) A(\tau,z)B(\tau,z) - 5E_6(\tau) A^2(\tau,z)
 =: \sum_{0 \le j \le 3} \myh_{j}(\t) \, \vartheta_{2,j}(\t,z) \, ,
\ee
to then show that for $n>0$ the $n$th coefficient of
\begin{equation} \label{xiD}
(-1)^{j+1} \frac{1}{\Delta} \left( \myh_j - 4800\mathcal{F}_j \right)
\end{equation}
is positive.

From equation~\eqref{ThetaD}, we get:
\bea
\label{h0def} \myh_{0} (\t)
& = & - \frac{1}{\eta^{12}(\t)} \left( \theta_{0}(2\t) \, \theta_{1}(\t) \, f_{0} (\t)
\+\theta_{1}(2\t) \, \theta_{0}(\t) \, f_{1}(\t) \right) \, , \\
\label{h1def} \myh_{1} (\t) \= \myh_{3} (\t) & = &  \frac{1}{2 \, \eta^{12}(\tau)} \theta_{1}\left(\frac{\t}{2}\right)
\left(f_0(\tau)\theta_0(\tau) + f_1(\tau) \theta_1(\tau) \right) \, , \\
\label{h2def} \myh_{2} (\t) & = & - \frac{1}{\eta^{12}(\t)} \left( \theta_{1}(2\t) \, \theta_{1}(\t) \, f_{0} (\t)
\+\theta_{0}(2\t) \, \theta_{0}(\t) \, f_{1}(\t) \right) \, ,
\eea
with $f_0$ and $f_1$ defined as in (\ref{definef0}) and (\ref{definef1}). %
We have thus managed to express the terms of the theta decomposition of~\eqref{myhdef} in terms of
Eisenstein series, theta functions, and eta functions.

We now address the issue of positivity of various Fourier coefficients.
We find many functions whose coefficients are all positive except for the first few ones,
and these are then   multiplied by theta series and negative powers of the eta function.
In order to handle such products, we  use  Lemma \ref{multD} and the following lemma which
 deals with multiplication by  theta series and whose proof is straightforward. 
\begin{lemma}\label{thetamult}
Assume that $f(q)=\sum_{n\geq 0} a(n)q^n$ satisfies $a(n)>0$ for $n>n_0$ for some $n_0\in\N_0$. Then for $\lambda\in\{0, 1\}$ and $m\in\N$ the $n$th coefficient of $q^{-m\lambda^2/4}f(q)\theta_\lambda(m\tau)$
is at least $\delta_\lambda a(n)-2\sum_{0\leq j\leq n_0}|a(j)|$, where $\delta_0=1$ and $\delta_1=2$.
Moreover if for some $n_1\in\N$ we have that
$\delta_\lambda a(n)>2\sum_{0\leq j\leq n_0}|a(j)|$ for $n\geq n_1$, then the $n$th coefficient of $f(q)\theta_\lambda(m\tau)$ is positive for $n\geq n_1+\frac{\lambda^2}{4}$.
\end{lemma}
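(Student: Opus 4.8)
The plan is to prove the lemma by a single Cauchy-product estimate, using nothing beyond the explicit power series of the theta constant $\theta_\lambda(m\tau)$; the statement is elementary, and the only thing requiring attention is the bookkeeping of the fractional $q$-powers. First I would record the expansion I need. Since $\theta_\lambda(\tau)=\vartheta_{1,\lambda}(\tau,0)$, the definition of $\vartheta_{1,\lambda}$ gives, for $\lambda\in\{0,1\}$ and $m\in\N$,
\[
q^{-m\lambda^2/4}\,\theta_\lambda(m\tau)\;=\;\delta_\lambda+\sum_{n\geq 1}t_\lambda(n)\,q^n,
\]
where $\delta_0=1$ and $\delta_1=2$ are the numbers of lowest-order terms, and each $t_\lambda(n)$ with $n\geq1$ equals $0$ or $2$: it equals $2$ precisely when $n=mk^2$ (for $\lambda=0$) or $n=mk(k+1)$ (for $\lambda=1$) for some $k\geq1$, because the substitutions $k\mapsto -k$, respectively $k\mapsto -1-k$, pair up the lattice points. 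The only three facts I will use are: the constant term is $\delta_\lambda$, all coefficients are nonnegative, and all coefficients are at most $2$.

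Next I would expand the product. Writing $f(q)\,q^{-m\lambda^2/4}\theta_\lambda(m\tau)=\sum_{N\geq0}b(N)\,q^N$, one has
\[
b(N)\;=\;\delta_\lambda\,a(N)+\sum_{n=1}^{N}a(N-n)\,t_\lambda(n),
\]
and I would split the last sum according to whether $N-n>n_0$ or $N-n\leq n_0$. In the first case $a(N-n)>0$ while $t_\lambda(n)\geq0$, so those terms contribute $\geq0$; in the second case $a(N-n)\,t_\lambda(n)\geq -2\,|a(N-n)|$ since $0\leq t_\lambda(n)\leq2$, and the indices $N-n$ occurring there form a subset of $\{0,1,\dots,n_0\}$. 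This yields
\[
b(N)\;\geq\;\delta_\lambda\,a(N)-2\sum_{0\leq j\leq n_0}|a(j)|,
\]
which, upon multiplying back by the monomial $q^{m\lambda^2/4}$, is exactly the first assertion of the lemma.

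The second assertion then follows immediately: under the hypothesis $\delta_\lambda a(n)>2\sum_{0\leq j\leq n_0}|a(j)|$ for all $n\geq n_1$, the bound just established gives $b(N)>0$ for all $N\geq n_1$, and since $q^{m\lambda^2/4}$ is the leading monomial of $\theta_\lambda(m\tau)$ this says precisely that the coefficient of $q^{N+m\lambda^2/4}$ in $f(q)\theta_\lambda(m\tau)$ is positive for $N\geq n_1$. There is no genuine obstacle anywhere in the argument; the one point to watch is the half-integral exponent arising when $\lambda=1$, so I would carry the shift by $q^{m\lambda^2/4}$ explicitly through both parts rather than suppress it, and state the expansion of $q^{-m\lambda^2/4}\theta_\lambda(m\tau)$ cleanly at the outset so that both assertions drop out of the same two-line estimate.
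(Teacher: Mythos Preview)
Your argument is correct and is precisely the direct Cauchy-product estimate the paper has in mind; the paper itself omits the proof entirely, calling it ``straightforward,'' and your write-up supplies exactly the natural details. One very minor point: the lemma as stated in the paper has $n_1+\tfrac{\lambda^2}{4}$ rather than $n_1+\tfrac{m\lambda^2}{4}$, which is a typo there; your final sentence in fact proves the correct version with the shift $m\lambda^2/4$, consistent with how the lemma is applied later (e.g.\ with $m=2$ and shift $j^2/2$).
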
%

We are now ready to look at the positivity of the Fourier coefficients of the various functions.
Recall the definitions of $g_0$ and $g_1$ in (\ref{g0}) and (\ref{g1}), respectively.
We claim that all coefficients of
 $q^{-\frac14} g_0\eta^6$ and all but the constant coefficient of $-g_1\eta^6$ are positive.
Indeed, by Lemma~\ref{thetamult} the coefficients of $-E_2\theta_0$ and $-E_2\theta_1q^{-\frac14}$ are all
positive except the first giving the claim since
the coefficients of $\theta'_j$, $j=0,1$,
are positive and since
\begin{eqnarray*}
q^{-\frac14}g_0(\tau)\eta^6(\tau)&=&  10+48q+O \left( q^2\right),\\
g_1(\tau)\eta^6(\tau)&=& -1+70q+O \left(q^2 \right).
\end{eqnarray*}
Multiplying by $\eta^{-6}$, it is then clear that all coefficients of $g_0$ are positive.
For $n\geq 1$, the $n$th  coefficient of $-g_1\eta^6$ is  bounded by  $24\sigma_1(n)>6$.
Therefore, we have, using Lemma~\ref{multD}, that all coefficients with the exception of the
first of $q^{\frac14} g_1$ are negative.

Next  we  look at the positivity of the coefficients of $f_{0}$ and $f_{1}$.
Using the easily verified identity
\begin{equation}\label{thisq}
E_2E_4-E_6=3E_4' \, ,
\end{equation}
we obtain
$$
-11E_2(\tau)E_4(\tau)+5E_6(\tau)  =  -33E_4'(\tau)-6E_6(\tau)
 =  -6-4896 q-42768 q^2+72576 q^3+O\left(q^4\right) \, .
$$
\bea \nonumber
-11E_2(\tau)E_4(\tau)+5E_6(\tau) & = & -33E_4'(\tau)-6E_6(\tau) \,  \cr
& = & -6-4896 q-42768 q^2+72576 q^3+O\left(q^4\right) \, .
\eea
We denote the $n$th coefficient of this $q$-series by $a(n)$. It is given  for $n\geq 1$ by
\[
a(n)=-7920n\sigma_3(n)+3024\sigma_5(n)\geq 2529 n^5-7920n^4=:P_1(n),
\]
where we used that, for $n> 1$,
\[
\sigma_5(n)>n^5\qquad\qquad \sigma_3(n)\leq \frac{n^4}{16}+n^3.
\]
Note that  $P_1(n)>0$ for $n\geq 4$ and that
\[
a(n)-2\left(|a(0)|+|a(1)|+|a(2)|\right)>P_2(n):=P_1(n)-95340,
\]
which is for $n\geq 4$ strictly positive.
 By Lemma~\ref{thetamult}, it then follows that for $j=0,1$,  the $n$th Fourier coefficient of the function
\[
(5E_6-11E_2 E_4)\theta_j q^{-\frac{j^2}{4}}
\]
is   for $n\geq 4$ positive and bounded below by $P_2(n)$.
Moreover, since $E_4$ and  $\theta_j'q^{-j^2/4}$, $j=0, 1$, have positive Fourier coefficients, we obtain that for $n\geq 4$ the $n$th Fourier coefficients of $f_0q^{-1/4}$ and $f_1$ are positive and bounded below by $P_2(n)$.
Computing the first few coefficients gives
\begin{align*}
f_0(\tau) q^{-\frac14} &=120+21888q+200760q^2+1307520q^3+  O\left(q^4\right)  ,\\
f_1(\tau) &=-6-4380q+74160q^2+1127520q^3+O\left(q^4\right).
\end{align*}
We next aim to show that the Fourier coefficients of $\eta^{12}\myh_j$ are positive for all~$n$ except
for a finite number of possible exceptions.
For this purpose we first consider products of $f_0$ and $f_1$ with theta functions.
Since the coefficients of $f_0q^{-\frac14}$ are positive, it follows that for
$j=0,1$, the coefficients of $q^{(-j^2-1)/4}f_0\theta_j$ are all positive and thus also the coefficients of
$q^{(-j^2-1)/2} \theta_j(2 \tau)\theta_1(\tau) f_0(\tau)$ for $j=0, 1$.
Moreover, using again Lemma \ref{thetamult},  the $n$th coefficients of $q^{-j^2/4}f_1\theta_j$, $j=0, 1$,
are bounded below by
\[
P_3(n):=P_2(n)-2 \cdot (6+4380)
\]
which is positive for $n\geq 4$.
We determine the first coefficients as
\begin{align*}
f_1(\tau) \theta_0(\tau)&=-6-4392q+65400q^2+1275840q^3+  O\left(q^4\right)     ,\\
f_1(\tau) \theta_1(\tau) q^{-\frac14}&=-12-8760q+148308q^2+2246280q^3+  O\left(q^4\right).
\end{align*}
Thus, using that
$$
q^{-\frac14}\left(f_0(\tau)\theta_0(\tau)+f_1(\tau)\theta_1(\tau)\right)=   108+13368q+392844q^2+3955320q^3+O\left( q^4\right)
$$
gives that  all coefficients of
$(f_0\theta_0+f_1\theta_1)q^{-1/4}$ are positive and bounded below by $P_3(n)$. Therefore,
the same is true for $q^{-3/8}\myh_1 \eta^{12}$ and thus also for $q^{1/8} \myh_1$.

To treat $\myh_0$ and $\myh_2$, we apply Lemma \ref{thetamult} another time and
find that the $n$th coefficient of  $q^{-j^2/2}f_1(\tau)\theta_j(2 \tau)\theta_0(\tau)$
for $j=0,1$ is  bounded below by
\[
P_4(n):=P_3(n)-2\cdot(6+4392)
\]
which is positive for $n\geq 4$. Recall that  $q^{(-j^{2}-1)/2} \theta_{j}(2\t) \theta_{1}(\t) f_{0}(\t)$ has
positive coefficients. Thus  the coefficients of
$-q^{(-j^2-1)/2}\eta^{12}\myh_{2j}$, $j=0, 1$, are bounded below by~$P_4(n)$.
 We determine the first coefficients as
\begin{align*}
-q^{-\frac12}\eta^{12}(\tau)\xi_0(\tau)&=228+34992q+553040q^2+5298048q^3+   O\left(q^4\right)  ,\\
-\eta^{12}(\tau) \xi_2(\tau)&=-6-3912q+152940q^2+2070576 q^3+   O\left(q^4\right).
\end{align*}
Since all coefficients of $-q^{-1/2}\eta^{12}\xi_0$ are positive, the same is true for $-\xi_0$.
Moreover, the $n$th Fourier coefficient of $-\eta^{12} \xi_2$ is positive for $n\geq 2$ and bounded below  by $P_4(n)$ for $n\geq 4$.
By the proof of Lemma \ref{multD} we obtain that the coefficients of $-q^{\frac12}\xi_2$ are positive for $n\geq 2$ and bounded below by
$$
P_5(n):=P_4(n)-12 \cdot 3912
$$
which is positive for $n  \geq 4$.

The analysis of the three functions slightly differ from each other from now on and we start with the case $j=2$.
As shown above the $n$th coefficient of $-q^{1/2}\myh_2$ is bounded below by $P_5(n)$.  Note that all Fourier coefficients of   $\mathcal{F}_2$ are positive and that
$$
q^{\frac12}\left(- \myh_2(\tau)+4800 \mathcal{F}_2(\tau) \right) =  -6 - 1584q  +  115056q^2+3560256q^3
+O\left(q^4\right) .
$$
For   $n\geq 4$, the $n$th coefficient of this function are bounded below by $P_5(n)$, which
is bounded below by $24\cdot(6+1584)$. Therefore, we obtain, by  Lemma
\ref{multD} with $n_0=1$ and $k=24$ and by inspecting the first coefficient, that all coefficients with
positive exponent  of (\ref{xiD}) are positive.

In the case $j=0$, we proceed similarly. Note that for $n\geq 1$, the $n$th Fourier coefficient of
$\mathcal{F}_0$ is positive. Thus for $n\geq 1$ the $n$th Fourier coefficients of  $- \myh_0+4800 \mathcal{F}_0$ is positive and
$$
- \myh_0(\tau)+4800 \mathcal{F}_0(\tau)  =   -972+42528 q+985464q^2+1497196q^3+ O\left(q^4\right).
$$
From the above analysis, we moreover obtain that for   $n\geq 4$, the $n$th coefficient of this function is bounded below by  $P_5(n)$, which is bounded below by $24\cdot 972$.
Therefore we obtain by  Lemma~\ref{multD} with $n_0=0$ and $k=24$, that all positive coefficients of (\ref{xiD}) are positive.

We finally consider the case $j=1$. We have to compare the   coefficients of $\myh_1$ with the associated contribution coming from class numbers. Recall that for $n\geq 4$ the $n$th coefficient of $q^{1/8}\myh_1$ is bounded by $P_3(n)$. Next note  that
$$
q^{\frac18}\left(\myh_1(\tau)+4800 \mathcal{F}_1(\tau) \right)= 108 + 9972q + 568044q^2 + 10477416q^3 +O \left(q^4 \right) \, .
$$
Since
$$
\mathcal{F}_1(\tau) = \sum_{\ell\geq 1}H(8\ell-1)q^{\ell-\frac18} \, ,
$$
it is enough by the above considerations to show that  for $n \geq 4$
$$
P_3(n)>4800H(8n-1).
$$
By (\ref{classtrivial}) it is enough to show that
$$
P_3(n)>4800(8n-1)
$$
which is indeed satisfied for $n \geq 4$.

\newpage

\appendix
\section{Black hole degeneracies for $m=1,2,3,4$}

For the black holes in string theory on $K3 \times T^{2}$, the degeneracies are a function of the T-duality
invariants~$(M^{2}/2, N^{2}/2, M \dot N) = (m,n,\ell)$. As explained in~\S\ref{1/44d}, they are the
Fourier coefficient~$c(n,\ell)$ of the mock Jacobi form~$\psi_{m}$ of index~$m$ for $n \ge m$.
By the elliptic invariance, it is enough to
consider~$\ell = 0,\dots, m+1$. We list the first few coefficients of the mock Jacobi forms $\psi_{m}$
for the first four positive values of~$m$.

\bigskip

{$\bf m=1$}

\bigskip

\begin{tabular}{|c|cccccccc|cccccc}
\hline
$n$ & $-1$ & 0 & 1 & 2 & 3 & 4 & 5 & 6    \\
\hline
$\ell=0$ &  $-48$ &  $648$ &  $50064$ &  $1127472$ &  $16491600$ &  $185738352$ &  $1737283968$ &  $14086119024$ \\
$\ell=1$ &  $3$ &  $600$ &  $25353$ &  $561576$ &  $8533821$ &  $100390104$ &  $977183520$ &  $8203464720$ \\
\hline
\end{tabular}

\bigskip

\begin{tabular}{|c|cccc|cccccccccc}
\hline
$n$ & 7 & 8 & 9 &10  \\
\hline
$\ell=0$ & $101777516400$ & $668043042720$ & $4040083875024$ & $22756537895040$ \\
$\ell=1$ & $61077837780$ & $411421124040$ & $2544746970243$ & $14618739930912$ \\
\hline
\end{tabular}

\bigskip

\bigskip

{$\bf m=2$}

\bigskip

\begin{tabular}{|c|ccccccc|ccccccc}
\hline
$n$ & $-1$ & 0 & 1 & 2 & 3 & 4 & 5     \\
\hline
$\ell=0$ &  $-648$ &  $12800$ &  $1127472$ &  $32861184$ &  $632078672$ &  $9337042944$ &  $113477152800$\\
$\ell=1$ &  $72$ &  $8376$ &  $561576$ &  $18458000$ &  $392427528$ &  $6216536784$ &  $79330416536$\\
$\ell=2$ &  $-4$ &  $-1152$ &  $50064$ &  $3859456$ &  $110910300$ &  $2073849984$ &  $29495727056$\\
\hline
\end{tabular}

\bigskip

\begin{tabular}{|c|cccc|cccccccccc}
\hline
$n$ & 6 & 7 & 8 & 9  \\
\hline
$\ell=0$ &  $1181763743744$ &  $10838236934808$ &  $89288280271872$ &  $670746948265232$ \\
$\ell=1$ &  $855667882536$ &  $8055449338200$ &  $67714250601728$ &  $516898213691112$ \\
$\ell=2$ &  $343972015104$ &  $3437700768840$ &  $30312295881600$ &  $240704209521024$ \\
\hline
\end{tabular}

\bigskip

\bigskip

{$\bf m=3$}

\bigskip

\begin{tabular}{|c|cccccc|cccccccc}
\hline
$n$ & $-1$ & 0 & 1 & 2 & 3 & 4     \\
\hline
$\ell=0$ &  $-6404$ &  $153900$ &  $16491600$ &  $632078672$ &  $16193130552$ &  $315614079072$\\
$\ell=1$ &  $972$ &  $85176$ &  $8533821$ &  $392427528$ &  $11232685725$ &  $233641003920$\\
$\ell=2$ &  $-96$ &  $-15600$ &  $1127472$ &  $110910300$ &  $4173501828$ &  $100673013264$\\
$\ell=3$ &  $5$ &  $1728$ &  $130329$ &  $18458000$ &  $920577636$ &  $26563753008$\\
\hline
\end{tabular}

\bigskip

\begin{tabular}{|c|cccc|cccccccccc}
\hline
$n$ & 5 & 6 & 7 & 8   \\
\hline
$\ell=0$ &  $4980146121600$ &  $66223829146464$ &  $763810107420924$ &  $7808500872944344$\\
$\ell=1$ &  $3838665438606$ &  $52438270948872$ &  $616509025474839$ &  $6394025215102200$\\
$\ell=2$ &  $1817641213584$ &  $26523447693936$ &  $327561687731700$ &  $3530513346970608$\\
$\ell=3$ &  $543037538313$ &  $8689043006928$ &  $115301073750300$ &  $1317086884043616$\\
\hline
\end{tabular}

\bigskip

\newpage

{$\bf m=4$}

\bigskip

\begin{tabular}{|c|cccccc|cccccccc}
\hline
$n$ & $-1$ & 0 & 1 & 2 & 3 & 4      \\
\hline
$\ell=0$ &  $-51396$ &  $1410048$ &  $185738352$ &  $9337042944$ &  $315614079072$ &  $7999169992704$\\
$\ell=1$ &  $9600$ &  $700776$ &  $100390104$ &  $6216536784$ &  $233641003920$ &  $6264458136216$\\
$\ell=2$ &  $-1296$ &  $-154752$ &  $16491600$ &  $2073849984$ &  $100673013264$ &  $3093523125120$\\
$\ell=3$ &  $120$ &  $23328$ &  $1598376$ &  $392427528$ &  $26563753008$ &  $987647838816$\\
$\ell=4$ &  $-6$ &  $-2304$ &  $-209304$ &  $32861184$ &  $4173501828$ &  $203003283456$\\
\hline
\end{tabular}

\bigskip

\begin{tabular}{|c|cccc|cccccccccc}
\hline
$n$ & 5 & 6 & 7 & 8   \\
\hline
$\ell=0$ &  $161166049715136$ &  $2690630398144512$ &  $38396325233501604$ &  $479643192755712000$\\
$\ell=1$ &  $130483874926824$ &  $2226273321514872$ &  $32263019501551200$ &  $407734088790024888$\\
$\ell=2$ &  $70154254155648$ &  $1268909447328000$ &  $19194759843735744$ &  $250750639230059136$\\
$\ell=3$ &  $25364019402816$ &  $501130864684008$ &  $8102401641823224$ &  $111544858411221936$\\
$\ell=4$ &  $6153448819056$ &  $136676238618624$ &  $2415415078450044$ &  $35695523741819136$\\
\hline
\end{tabular}

\bigskip

\begin{tabular}{|c|ccc|ccccccccccc}
\hline
$n$ & 9 & 10 & 11    \\
\hline
$\ell=0$ &  $5343131141125608240$ &  $53865362293195763712$ &  $497287540606193791776$\\
$\ell=1$ &  $4583919031715817912$ &  $46559483089512998904$ &  $432557670343025950296$\\
$\ell=2$ &  $2895018873817853040$ &  $30059106955693337088$ &  $284514174765163372992$\\
$\ell=3$ &  $1344064501276102440$ &  $14463582110776040904$ &  $141148056895219254264$\\
$\ell=4$ &  $455984813319184992$ &  $5155065821726530560$ &  $52491288465592800984$\\
\hline
\end{tabular}

\bigskip

\bibliographystyle{amsplain}	

\providecommand{\bysame}{\leavevmode\hbox to3em{\hrulefill}\thinspace}
\providecommand{\MR}{\relax\ifhmode\unskip\space\fi MR }
\providecommand{\MRhref}[2]{%
  \href{http://www.ams.org/mathscinet-getitem?mr=#1}{#2}
}
\providecommand{\href}[2]{#2}

\end{document}